\title{\textbf{Finding a Bounded-Degree Expander \\ Inside a Dense One}\footnote{Partially supported by the ERC Advanced Grant 788893 AMDROMA "Algorithmic and Mechanism Design Research in Online Markets" and MIUR PRIN project ALGADIMAR "Algorithms, Games, and Digital Markets".}}
\author{
    Luca Becchetti\\
    {\scriptsize{}Sapienza Università di Roma}\\
    {\footnotesize{} Rome, Italy}\\
    {\footnotesize{}\texttt{becchetti@dis.uniroma1.it}} 
    \and Andrea Clementi \\  
    {\scriptsize{}Università di Roma Tor Vergata}\\
    {\footnotesize{} Rome, Italy}\\
    {\footnotesize{}\texttt{clementi@mat.uniroma2.it}} 
    \and Emanuele Natale\\
    {\scriptsize{}Universit\'e C\^ote d'Azur, CNRS, INRIA}\\
    {\footnotesize{}Sophia Antipolis, France}\\
    {\footnotesize{}\texttt{natale@unice.fr}}
    \and Francesco Pasquale \\
    {\scriptsize{}Università di Roma Tor Vergata}\\
    {\footnotesize{} Rome, Italy}\\
    {\footnotesize{}\texttt{pasquale@mat.uniroma2.it}} 
    \and Luca Trevisan \\
    {\scriptsize{}U.C. Berkeley}\\
    {\footnotesize{} Berkeley, CA, United States}\\
    {\footnotesize{}\texttt{luca@berkeley.edu}}
}
\date{}
\newtheorem{definition}{Definition}
\newtheorem{lemma}[definition]{Lemma}
\newtheorem{claim}[definition]{Claim}
\newtheorem{theorem}[definition]{Theorem}
\newcommand{\bigO}{\mathcal{O}}
\newcommand{\Prob}[2]{\mathbf{P}_{#1} \left( #2 \right)}
\newcommand{\Expec}[2]{\mathbf{E}_{#1} \left[ #2 \right]}
\newcommand{\polylog}[1]{\mbox{polylog}\left(#1\right)}
\newcommand{\poly}{ {\mathrm{poly}}}
\newcommand{\local}{{\sc{local}}}
\newcommand{\gossip}{{\sc{gossip}}}
\newcommand{\ALG}{{\sc raes}}
 \newcommand{\dout}{ {\mathrm{d_v^{out}}} }
\newcommand{\din}{ {\mathrm{d_v^{in}}}}
\newcommand{\grado}{\ensuremath{d}}
\newcommand{\bx}{\mathbf{x}}
\newcommand{\by}{\mathbf{y}}
\newcommand{\sM}{\mathcal{M}}
\newcommand{\SST}{\mathrm{SS}}
\newcommand{\sC}{\mathcal{C}}
\newcommand{\bin}{\mathbf{Cost}}
\newcommand{\dest}{\mathrm{Dest}}
\newcommand{\sRJ}{\mathrm{Rej}} 
\newcommand{\nreq}{\ensuremath{d}} 
\newcommand{\rc}{\ensuremath{rc}}
\newcommand{\rss}{\ensuremath{rss}}
\newcommand{\vol}{\ensuremath{vol}}
\newcommand{\out}{\ensuremath{\delta}}
\newcommand{\state}{\mathbf{X}}
\newcommand{\DEC}{\sc{Dec}}
\newcommand{\freq}{\ensuremath{\epsilon}} 
\renewcommand{\leq}{\leqslant}
\renewcommand{\le}{\leqslant}
\renewcommand{\geq}{\geqslant}
\renewcommand{\ge}{\geqslant}
\renewcommand{\epsilon}{\varepsilon}
\newcommand{\bY}{\mathbf{Y}}
\begin{document}


\maketitle
\begin{abstract}

It follows from the Marcus-Spielman-Srivastava proof of the Kadison-Singer
conjecture that if $G=(V,E)$ is a $\Delta$-regular dense expander then there is
an edge-induced subgraph $H=(V,E_H)$ of $G$ of constant maximum degree which is
also an expander. As with other consequences of the MSS theorem, it is not
clear how one would explicitly construct such a subgraph. 

We show that such a subgraph (although with quantitatively weaker expansion and
near-regularity properties than those predicted by MSS) can be constructed with
high probability in linear time, via a simple algorithm. Our algorithm allows a
distributed implementation that runs in $\bigO(\log n)$ rounds and does
$\bigO(n)$ total work with high probability. 

The analysis of the algorithm is complicated by the complex dependencies that
arise between edges and between choices made in different rounds. We sidestep
these difficulties by following the combinatorial approach of counting the
number of possible random choices of the algorithm which lead to failure. We do
so by a compression argument showing that such random choices can be encoded
with a non-trivial compression.

Our algorithm bears some similarity to the way agents construct a communication
graph in a peer-to-peer network, and, in the bipartite case, to the way agents
select servers in blockchain protocols. 

\end{abstract}

\medskip


\thispagestyle{empty}
 
\newpage
\setcounter{page}{1}

 
\section{Introduction}

\def\R{{\mathbb R}}

The proof of the Kadison-Singer conjecture by Marcus, Spielman and
Srivastava~\cite{MSS15.ks} (henceforth, the {\em MSS
Theorem}) has several important graph theoretic corollaries. In particular, if
$G=(V,E)$ is an undirected graph with $n$ nodes in which every edge has
effective resistance $\bigO(n/|E|)$, then there is an edge-induced subgraph
$H=(V,E_H)$ of $G$ that has $\bigO(n/\epsilon^2)$ edges and that is an
unweighted $\epsilon$-spectral-sparsifier\footnote{A weighted graph $H=(V,E_H)$
is an $\epsilon$-spectral-sparsifier \cite{BSS12} of a graph $G=(V,E_G)$ if,
for every vector $\bx\in \R^V$, we have \[ (1-\epsilon) \sum_{(u,v)\in E_G}
(x_u - x_v)^2 \leq \sum_{(u,v) \in E_H} w_H(u,v) \cdot (x_u - x_v)^2 \leq
(1+\epsilon) \sum_{(u,v)\in E_G} (x_u - x_v)^2 \] where $w_H(u,v)$ is the
weight of the edge $(u,v)$ in $H$. We say that $H$ is {\em unweighted} if the
weights of all the edges of $H$ are all equal to the same scaling factor $|E_G|
/ |E_H|$.} of $G$.
 
Interesting examples of graphs to which this statement applies are
edge-transitive graphs, such as the hypercube, and regular expanders of
constant normalized edge expansion. As with other consequences of the MSS
Theorem, and other non-constructive results proved with similar techniques, it
is not known how to construct such subgraphs in polynomial (or even
subexponential) time.

In the case of regular expanders, the result, qualitatively, states that if
$G=(V,E)$ is a $\Delta$-regular graph of constant normalized edge expansion,
there exists an edge-induced subgraph $H$ of $G$ that has constant maximum
degree and constant normalized edge expansion. 

In this work, we show how to constructively find such an $H$, assuming that
$\Delta = \Omega(n)$ and that the second eigenvalue of the adjacency matrix of
$G$ (which measures the spectral expansion of the graph) is at most a
sufficiently small constant times the degree $\Delta$.
The randomized algorithm we propose receives as input a $\Delta$-regular graph
$G$ and two integer parameters $d$ and $c$.

If we only assume $\Delta= \Theta(n)$,  $c > 2n/|E|$ and $d$ is a
sufficiently large absolute constant then, with high probability, 
the algorithm completes in $\bigO(n)$ steps and returns a subgraph $H$ of $G$, 
in which each node has degree between $d$ and $(c+1) \cdot d$ (see Theorem~\ref{thm:main-convergence}). 

If we further assume that the second eigenvalue of the adjacency matrix of $G$
is at most $\gamma\Delta$, with $\gamma$ a sufficiently small constant, we can
prove that, with high probability, $H$ has conductance $\Omega(1)$ (see
Theorem~\ref{thm:main-expanders}).

Our algorithm is extremely simple and naturally lends itself to a distributed implementation, in a model in which the underlying communication network is $G$ itself, with its nodes as computing elements. In this model, the nodes of $G$ can collectively identify a subgraph $H$ with the properties mentioned above in $\bigO(\log n)$  rounds and with $\bigO(n)$ total work and communication cost, in the sense that at the end of the protocol, each node knows its neighbors in $H$. 

The distributed version of our algorithm, that we call \ALG{} (for \emph{Request a link, then Accept if Enough Space}),
works in rounds, each consisting of two phases. Initially, each node has $0$ outgoing links and $0$ incoming links. In the first phase of each round, each node $v$ selects enough random neighbors (according to the topology of $G$) so that linking to all of them would secure $v$ a total of $d$ outgoing links. It then submits a request to each selected neighbor to establish a link. In the second phase of the round, each node accepts all requests received in the first phase of the current round, unless doing so would cause it to exceed the limit of $cd$ incoming links; if this is the case, the node rejects all requests it received in the first phase of the current round.  The algorithm completes when each node has established exactly $d$ outgoing links, so that no further requests are submitted. A formal description of the algorithm  is given in Section~\ref{se:prelim}. 

To show that our algorithm completes in $\bigO(\log n)$ rounds with high
probability when $G$ is $\Delta$-regular and $c > 2n/\Delta$, we show that, for any request submitted by some node $v$ in any round $t$, regardless of the remaining randomness of the algorithm, the request is accepted with probability at least $1/2$. This happens since, in each round, the number of nodes that reject any request is at most $n/2$. This is enough to show that convergence takes $\bigO(\log n)$ rounds with high probability and total work $\bigO(dn)$ on average. 
To prove that the total work is $\bigO(dn)$ with high probability we show that,
in each round $t$, if $\dout$ denotes the current number of $v$'s outgoing
links, $d\cdot n - \Expec{}{\sum_v \dout}$, i.e., the expected number of ``missing links'', shrinks, on average, by a constant factor. Moreover, the amount by which the above quantity changes at each step is a Lipschitz function of independent random variables, which means that we can argue with high
probability about the amount by which this quantity decreases.

The main result of this work is the proof that, if $G$ is a sufficiently good
expander, then the graph produced by the algorithm has constant expansion. 
In the spirit of how one analyzes the expansion of random regular graphs, we
would like to argue that, for every set $S\subseteq V$ of $s\leq n/2$ vertices,
there is at least a probability, say, $1 -  n^{-2} \cdot {n \choose s}^{-1}$,
that, of the $ds$ outgoing links from the vertices of $S$, at least
$\Omega(ds)$ are links from $S$ to $V-S$.  Then we could use a union bound over
all possible sets $S$ to say that with probability at least $1-1/n$ every set
$S$ has at least $\Omega(ds)$ links crossing the cut and going into $V-S$. The
probability distribution of the links created by the algorithm, and the ways in
which they are correlated, are however very difficult to analyze.

Our approach is to use a {\em compression argument}: we show that the random
choices of the algorithm that lead to a non-expanding graph can be
non-trivially compressed, and hence have low probability. The approach of
proving that an event is unlikely by showing that the random choices leading to
it are compressible is often a convenient way to analyze the outcome of an
algorithm. Such arguments are sometimes expressed in the language of {\em
Kolmogorov complexity}~\cite{VL97} and they are often used in cryptography to
analyze the security of protocols that involve a random oracle,
following~\cite{GT00}. In~\cite{MMR17}, the authors review various
probabilistic analyses that can be performed using compression argument (which
they call {\em encoding} arguments).

Our argument is roughly as follows: suppose that, in the graph constructed by
the algorithm, $S$ is a non-expanding set of vertices. If $G$ is a sufficiently
good $\Delta$-regular  expander, then, from the expander mixing lemma, we get
that the typical vertex of $S$ has only about $\Delta \cdot |S|/n$ neighbors in
$S$, but, if $S$ is non-expanding in $H$, then the typical node in $S$ has,
say, at least $.9 \cdot d$ of its $d$ outgoing links  in $S$. This means that,
for the typical node in $S$, we can represent $.9 d$ of its $d$ outgoing links
using $\log \frac {\Delta \cdot |S|}n$ bits instead of $\log \Delta$, with a
saving of order of $d |S| \log \frac n {|S|}$ bits. For sufficiently large
constant $d$, this is more than the $\log {n \choose |S|}$ bits that it takes
to represent the set $S$.  Unfortunately, things are not so easy because we
need the representations of choices made by the nodes in the algorithm to be
prefix-free, in order for their concatenation to be decodable. 
Therefore, we have to spend some additional bits in the representation 
of various terms, in particular for the choices that lead to
links from $S$ to $V-S$ (which are not so many since $S$ is a non-expanding
set) and for requests that are rejected. 
To complete the argument, we have to argue that the overall number of requests from nodes in $S$ that are rejected cannot be too large, for we would otherwise have a non-trivial way of compressing their description. This is true because, as argued above, each request has a small probability of being rejected, so that realizations of the random algorithm that lead to many rejected requests are unlikely,  hence compressible (for further details see Section \ref{ssec:overview}).

Algorithm \ALG\ is inspired by the way nodes create  bounded-degree overlay
networks in real-life distributed systems, such as peer-to-peer
protocols~\cite{GMS06,MSW06}  like BitTorrent, or in distributed ledger
protocols such as Bitcoin~\cite{N08}. In this protocol for example, each node
in a communication network is aware of the existence of a certain subset of the
other nodes (in our algorithm, for the generic node $v$ this subset corresponds to the set of $v$'s neighbors in $G$). Each node tries to establish a minimum number of connections to other nodes (or to special ``server'' nodes\footnote{In this setting, we notice that if $G$ is bipartite then $H$ is bipartite as well.}) and does so by selecting them at random from its list of known nodes (or known servers).  Nodes also have a maximum number of connections they are going to accept, rejecting further connections once this limit is reached.

On the other hand, our algorithm does not capture important traits of
peer-to-peer and blockchain models, such as the fact that nodes can join or
leave the network, and that nodes can exchange their lists of known nodes, so
that the graph ``$G$'' in fact is dynamic. We believe, however, that our
analysis addresses important aspects, such as the complicated dependencies that
arise between different links in the virtual network, and the {\em expansion}
properties of the resulting virtual network. Expansion in particular is closely
related to resilience to nodes leaving the network, a very important property
in practice.  

\section*{Related work}
\paragraph{Distributed constructions of expanders.} Our main result is an efficient, distributed algorithm to construct a bounded-degree expander. This question has been addressed for a number of models and initial conditions.
In~\cite{LS03}, Law and Siu  provide a distributed protocol running on the
\local\ asynchronous model  that  form expander graphs of arbitrary fixed
degree $d$. Their goal is to maintain the expansion property under insertions,
starting from a constant-size graphs, and they show how to do so in constant
time and constant message complexity per node insertion. See also~\cite{GMS06}
and~\cite{NW07} for such sequential constructions of expanders. 

In~\cite{ABLMO16}, Allen-Zhu et al. show a simple and local protocol that,
starting from any connected $d$-regular connected graph with $d = \Omega(\log
n)$, returns a $d$-regular expander. At every every round, an edge $e$ is
selected u.i.r. together with one  length-3 path including $e$ and, then, a
suitable flipping of the edges of this path is performed (so, the obtained
graph is not guaranteed to be a  subgraph of the original graph). Their
spectral analysis of the evolving graph shows that, after $\bigO(n^2d^2\polylog
n))$ rounds, the obtained random graphs is an expander, with high probability.
Their algorithm models the way in which nodes exchange neighborhood information
in real-life protocols, and it works starting from much more limited
information than ours (their initial information is an arbitrary graph of
logarithmic degree, while we start from a graph of linear degree which is
already an expander), and the price they pay is a polynomial, rather than
logarithmic, convergence time.
 
\paragraph{Sparsification.} We motivated our main result as a constructive
proof of a special case of the sparsification results implied by the MSS
theorem, for which no constructive proofs are known. Here it matters that we
are interested in sparsifying a regular graph by using an unweighted subgraph
of bounded maximum degree. If we allowed weighted graphs, and we were only
concerned about the average degree of the sparsifier, then an explicit
construction of constant average-degree sparsifiers for all graphs is given by
the BSS sparsifiers of~\cite{BSS12}. A parallel construction of the BSS
sparsifier, however, is not known. Parallel construction of (weighted,
unbounded max degree) sparsifiers have been studied~\cite{K16}, but such
constructions involve graphs of logarithmic average degree, a setting in which
our problem is trivial: given a $\Delta$-regular graph $G=(V,E)$, if we choose
each edge independently  with probability order of $(\log n)/\Delta$, we get a
graph that with high probability has maximum degree $\bigO(\log n)$ and, using
matrix Chernoff bound, we can show that it is a spectral sparsifier of $G$ if
$G$ is such that every edge has effective resistance $\bigO(n/|E|)$, including
the case of expanders and of edge-transitive graphs.
Finally, in   \cite{FM99},    Frieze and Molloy consider the task of  partitioning  expander graphs. In more detail, they provide a partitioning algorithm that, given as input a $\Delta$-regular graph with edge-expansion $\Phi$ and a parameter $k$, returns a partition $(E_1,...,E_k)$ of the edges,
such that each induced graph $G_i=(V,E_i)$ is almost-regular with node degree $\Theta(\frac \Delta k)$ and it has  edge expansion $\Omega(\Phi/k)$. 
Their algorithm runs in $\bigO(n^{\log \Delta})$ time and the required assumptions on $\Delta,k$ and $\phi$ do not allow to produce {\em constant}-degree subgraphs (their construction in fact requires $k = \bigO( \Delta /{\log \Delta})$).

\paragraph{Parallel Balls-Into-Bins Processes.} If the underlying 
network is the complete graph $K_n$, then \ALG\ can be seen as a    
\emph{parallel balls-into-bins}  algorithm~\cite{ACMR98,LW11} with $m=dn$ balls,  each one representing an outgoing-link  
request which must be assigned to one of $n$ bins, corresponding to the 
nodes of the network. In this perspective, our  algorithm    assigns  
each ball to one bin, so that  the maximum load of the bins is at most 
$c d$, for some  constant $c$ and the algorithm terminates in $\bigO(\log n)$ rounds with high probability.    
Several   algorithms have been introduced for this 
problem and the best  algorithms achieve constant maximum load within a 
constant number of rounds by  using $k>1$ random choices at every round 
for each ball~\cite{LW11}. The RAES strategy adopted by our algorithm  
is similar to the one used in the basic version of Algorithm 
\textsc{parallelthreshold} analysed in~\cite{BFLS} by Berenbrink et Al, 
which is in turn a  parallelized version of the scheduling strategy 
studied in~\cite{BKSS13}. They show that the convergence time is $\bigO(n 
\log m)$ when $cd = d + 1$, while our analysis implies that it is $\bigO(\log 
m)$ when $c$ is an absolute constant larger than 1. The maximum 
number of balls accepted by each bin, called the \emph{threshold}, is 
fixed to $\lceil m/n\rceil +1$. They show  this basic version, 
achieving an almost tight maximum load, converges within        $\bigO(n 
\log m)$ rounds, w.h.p. They also conjecture a tight lower bound on the 
convergence time.

\section{Preliminaries and main result}\label{se:prelim}
For an undirected graph $G = (V, E)$,  the \textit{volume} of a subset of nodes
$U \subseteq V$, is $\vol(U) =  \sum_{u\in U}\grado_u$. Notice that when $G$ is
$\Delta$-regular, we have $\vol(U) = \Delta |U|$. Consider two (not necessarily disjoint) subsets $U, W\subseteq V$, we define $e(U, W)$ as the number of edges in $G$ with one endpoint in $U$ and the other in $W$.

\begin{definition}\label{def:expander}
A graph $G = (V, E)$ is an $\epsilon$\textit{-expander} if, for every subset 
$U\subset V$ with $|U| \le n/2$, the 
 number $e(U,V-U)$ of edges in the cut $(U, V-U)$ is at least 
$\epsilon\cdot\vol(U)$.
\end{definition}

\noindent The expansion properties we derive for the subgraph returned by
Algorithm \ALG\ turn out to depend on the spectral gap of the input graph. In
particular, our analysis uses the following ``one-sided'' version of the
\textit{Expander Mixing Lemma}~\cite{LP17}, which establishes a connection
between the second largest eigenvalue of the adjacency matrix of $G$ and its
expansion properties and also holds for bipartite graphs. 

\begin{lemma}\label{le:exp_mixing_lemma}
Assume $G = (V, E)$ is a $\Delta$-regular graph and let $\lambda$ be the second
largest eigenvalue of $G$'s adjacency matrix\footnote{Since $G$ is
$\Delta$-regular, the bounds on $\lambda$ we derive immediately translate into
bounds on the second largest eigenvalue of $G$'s normalized matrix.}.  Let $S$
be any subset of nodes. Then, the number $e(S,S)$ of edges of $G$ with both
endpoints in $S$ is at most
\[
\frac 12 \, \left(\frac{ \Delta  |S|^2 }n \, + \,   \lambda   |S| \right)  \, .
\]
\end{lemma}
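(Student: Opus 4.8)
The plan is a standard spectral argument based on the quadratic form of the adjacency matrix $A$ of $G$. Writing $\mathbf{1}_S\in\{0,1\}^V$ for the indicator vector of $S$, every edge with both endpoints in $S$ is counted twice in $\sum_{u,v}A_{uv}(\mathbf{1}_S)_u(\mathbf{1}_S)_v$, so $\mathbf{1}_S^\top A\,\mathbf{1}_S=2\,e(S,S)$, and it suffices to upper bound this quadratic form by $\frac{\Delta|S|^2}{n}+\lambda|S|$.

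The first step is to decompose $\mathbf{1}_S$ along the top eigenvector. Since $G$ is $\Delta$-regular, the all-ones vector $\mathbf{1}$ is an eigenvector of $A$ with eigenvalue $\Delta$, and by Perron--Frobenius it is a \emph{largest}-eigenvalue eigenvector; hence $\mathbf{1}^\perp$ is spanned by eigenvectors of $A$ with eigenvalues at most $\lambda$, and the Rayleigh quotient of $A$ restricted to $\mathbf{1}^\perp$ is at most $\lambda$. Now write $\mathbf{1}_S=\alpha\mathbf{1}+\by$ with $\by\perp\mathbf{1}$; projecting onto $\mathbf{1}$ gives $\alpha=|S|/n$, and the Pythagorean identity gives $\|\by\|^2=\|\mathbf{1}_S\|^2-\alpha^2\|\mathbf{1}\|^2=|S|-|S|^2/n$. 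Because $\mathbf{1}$ is an eigenvector, $\mathbf{1}$ and $\by$ are $A$-orthogonal, so $\mathbf{1}_S^\top A\,\mathbf{1}_S=\alpha^2\,\mathbf{1}^\top A\,\mathbf{1}+\by^\top A\,\by=\frac{\Delta|S|^2}{n}+\by^\top A\,\by$. Bounding $\by^\top A\,\by\le\lambda\|\by\|^2=\lambda(|S|-|S|^2/n)\le\lambda|S|$ and dividing by $2$ then yields the claimed inequality.

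The one genuinely non-routine point — and the main thing to be careful about — is the claim that the Rayleigh quotient of $A$ on $\mathbf{1}^\perp$ is controlled by the \emph{second} largest eigenvalue $\lambda$; this is exactly where $\Delta$-regularity enters, since it forces $\mathbf{1}$ to be a top eigenvector. This is also why the bound survives for bipartite $G$: such graphs have $-\Delta$ in their spectrum, which would spoil a two-sided mixing lemma involving $\max\{\lambda,|\lambda_{\min}|\}$, but the argument above only touches the top of the spectrum on $\mathbf{1}^\perp$ and is therefore insensitive to how negative the smallest eigenvalue is — this is what makes the estimate ``one-sided''. (The very last inequality uses $\lambda\ge 0$ to pass from $\lambda(|S|-|S|^2/n)$ to $\lambda|S|$; when $\lambda<0$ one simply keeps the sharper bound $2\,e(S,S)\le\frac{\Delta|S|^2}{n}+\lambda(|S|-|S|^2/n)$.)
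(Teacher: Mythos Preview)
Your proof is correct and is essentially the same spectral argument as the paper's: both compute $\mathbf{1}_S^\top A\,\mathbf{1}_S=2\,e(S,S)$ and isolate the contribution of the all-ones direction. The only cosmetic difference is that the paper packages the decomposition by passing to the matrix $A-\frac{\Delta}{n}J$ (whose largest eigenvalue is $\lambda$) and bounding $\mathbf{1}_S^\top(A-\tfrac{\Delta}{n}J)\mathbf{1}_S\le\lambda|S|$ directly, whereas you write $\mathbf{1}_S=\tfrac{|S|}{n}\mathbf{1}+\by$ explicitly; these are the same computation, and your version even makes the $\lambda\ge 0$ caveat in the last step a bit more transparent.
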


\begin{proof}
If $1_S$ is the indicator vector of $S$, then, it holds that
\[
1_S^\intercal A 1_S\, = \, \sum_{v \in V} |N_v(S)| \, = \, 2 \cdot e(S,S) \,, 
\]
where $N_v(S)$ is the set of $v$'s neighbors in $S$   and $e(S,S)$ is   the
number of edges with both end-points in $S$. Observe that the  matrix $A -
\Delta J/n$ (where $J$ is the matrix having all entries set to 1)  has largest
eigenvalue $\lambda$, so we get 
\[
1_S^\intercal ( A - \Delta J/n) 1_S
\, \leq \, \lambda  \| 1_S \|^2 
\, = \, \lambda \cdot |S| \,.
\]
We also notice that $1_S (\Delta J /n) 1_S^\intercal \, = \, \Delta |S|^2/n$. It thus
follows that   
\[
e(S,S) \, \leq \, \frac 12 \, \left(\frac{ \Delta  |S|^2 }n \, + \,   \lambda
|S| \right) \,.
\]
\end{proof}

In the next sections, we analyze the behaviour of Algorithm \ALG\ on 
dense, regular expanders. The algorithm was informally described in the 
introduction, a more formal description is given below.

\begin{algorithm}[H]
\caption{\ALG($G$, $d$, $c$)}\label{alg:raes}
	\begin{algorithmic}[1]
		\State $H:=$ empty directed graph over the node set $V$
		\While{$H$ has nodes of outdegree $<d$}
			\State {\sc Phase 1:}\Comment{$\dout$: current outdegree of $v$ in $H$}		
			\For {each node $v\in V$}
				\State $v\in V$ picks $d-\dout$ neighbors in $G$ uniformly at random
				\State $v$ submits a connection request to each of them
			\EndFor
			\State {\sc Phase 2:}\Comment{$\din$: current indegree of $v$ in $H$}
			\For {each node $v\in V$}
				\If {$v$ received $\leq cd - \din$ connection requests in the 
				previous phase}
					\State $v$ accepts all of them and the corresponding 
					directed links are added to $H$
				\Else
					\State $v$ rejects all connection requests received 
					in   Phase 1
				\EndIf
			\EndFor
		\EndWhile
		\State Replace each directed link by an undirected one\\
		\Return $H$
	\end{algorithmic}
\end{algorithm}

\noindent We next define the class of almost-regular graphs \ALG\ 
stabilizes on w.h.p.

\begin{definition}\label{def:almost-reg}
A graph $G = (V, E)$ is a $(\nreq, c\nreq)$\textit{-almost regular} graph if
the degree $\grado_v$ of any node $v\in V$ is such that $\grado_v \in
\{d,\ldots , (c+1) d\}$.
\end{definition}

\noindent Our main results can be formally stated as follows.
 
\begin{theorem}\label{thm:main-convergence}
For every $d \geq 1$, every $0 < \alpha \le 1$, every $c \geq
2/\alpha$, and
for every $\Delta$-regular graph $G = (V, E)$ with $\Delta = \alpha n$,
the time complexity of \ALG$(G,d,c)$ is $\bigO(n)$ w.h.p.
Moreover, the algorithm  can be implemented in the uniform \gossip\ distributed
model\footnote{In this very-restrictive communication model~\cite{CHKM12,H13},
at every synchronous step, each node can (only) contact a \textit{constant}
number of its neighbors, chosen u.i.r., and exchange two messages (one for each
direction) with each of them.} so that its parallel completion time is
$\bigO(\log n)$ and its overall message complexity is $\bigO(n)$, w.h.p.
\end{theorem}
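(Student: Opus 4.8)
\noindent\textit{Proof plan.} The plan is to reduce everything to one combinatorial fact about a single round. Write $M_t := dn - \sum_v \dout$ for the number of ``missing'' out-links at the beginning of round $t$ (so $M_1 = dn$, the sequence is non-increasing, and $M_t$ is also exactly the number of connection requests issued during round $t$, since each node with $\dout < d$ issues $d - \dout$ of them). The fact is: \emph{conditioned on the entire history before round $t$ --- and even conditioned in addition on all the other random choices made during round $t$ --- any single request issued in round $t$ is accepted with probability at least $1/2 - \bigO(d/\Delta) = 1/2 - o(1)$.} To see this, freeze everything except that one request, which goes to a neighbour of its sender $v$ chosen uniformly among the at least $\Delta - d$ neighbours of $v$ not already targeted by $v$ in the same round. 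Call a node $u$ \emph{saturated} if it rejects; this happens iff $u$ received more than $cd - \din$ requests, i.e.\ iff (requests received by $u$)~$+\,\din \ge cd + 1$. Summing this inequality over all saturated nodes and bounding by the sum over all nodes, and using that the total number of requests received equals $M_t$ while the sum of the in-degrees at the start of the round equals $dn - M_t$, we get that the number of saturated nodes is at most $dn/(cd+1) < n/c \le \alpha n/2 = \Delta/2$, where we used $c \ge 2/\alpha$; the same count with the extra frozen request added shows that fewer than $\Delta/2$ of $v$'s available neighbours are ones that would be saturated by receiving it. Hence the request is accepted with conditional probability at least $1 - (\Delta/2)/(\Delta - d) \ge 1/2 - o(1) \ge p_0$ for some absolute constant $p_0 > 0$ and $n$ large enough.

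The round bound then follows at once. Think of each node's $d$ out-links as $d$ slots that get filled over time; in every round in which a given slot of a given node $v$ is still empty, $v$ issues a request for it, which by the fact above is accepted with conditional probability at least $p_0$, regardless of the past and of all other choices in the current round. Hence that slot is still empty after $t$ rounds with probability at most $(1 - p_0)^t$; taking $t = \bigO(\log n)$ makes this at most $n^{-3}$, and a union bound over the at most $dn$ slots shows that \ALG\ terminates within $\bigO(\log n)$ rounds w.h.p. At termination every node has out-degree exactly $d$ and in-degree at most $cd$ --- since the cap is never exceeded --- so the returned graph is $(d, cd)$-almost regular.

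For the work/message bound we track $M_t$. The fact above gives $\Expcc{M_{t+1} \mid \mathcal{F}_t} \le (1 - p_0) M_t$, where $\mathcal{F}_t$ is the history before round $t$, because each of the $M_t$ requests of round $t$ is accepted with conditional probability at least $p_0$. To upgrade this to a high-probability statement, observe that, given $\mathcal{F}_t$, the quantity $M_{t+1}$ is a function of the at most $n$ independent ``choice blocks'' of round $t$ (the tuple of random targets picked by each node), and that changing one block changes $M_{t+1}$ by at most $\bigO(d \cdot \max_u r_u)$, where $r_u$ denotes the number of requests received by $u$: perturbing one node's targets perturbs the request counts of $\bigO(d)$ nodes by $\pm 1$, and each such perturbation can flip the accept/reject decision of a node that received $r_u$ requests, affecting the fate of those $r_u$ slots. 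Since the events ``$v$ targets $u$'' are independent over the senders $v$, a Chernoff bound and a union bound give $\max_u r_u = \bigO(\log n)$ throughout the execution w.h.p., so the bounded difference is $\bigO(d \log n)$. McDiarmid's inequality then yields that whenever $M_t \ge \sqrt{n}\,\log^2 n$ one has $M_{t+1} \le (1 - p_0/2) M_t$ with probability $1 - n^{-3}$; union-bounding over the $\bigO(\log n)$ such rounds, $M_t$ decays geometrically from $dn$ down to below $\sqrt{n}\,\log^2 n$, so the total number of requests issued while $M_t \ge \sqrt{n}\,\log^2 n$ is $\bigO\!\big(\sum_t (1 - p_0/2)^t\, dn\big) = \bigO(n)$ w.h.p.; and once $M_t < \sqrt{n}\,\log^2 n$ the sequence is non-increasing and, by the slot argument, reaches $0$ within $\bigO(\log n)$ further rounds w.h.p., adding at most $\bigO(\sqrt{n}\,\log^3 n) = o(n)$ further requests. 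Hence the total number of requests ever issued is $\bigO(n)$ w.h.p.

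Both remaining claims follow. Maintaining incrementally the set of nodes of out-degree $< d$ and iterating, in each round, only over those nodes and over the nodes that actually receive a request, round $t$ costs $\bigO(M_t)$ time, so the sequential running time is $\bigO(n + \sum_t M_t) = \bigO(n)$ w.h.p. For the uniform \gossip\ model, one round of \ALG\ is simulated in $\bigO(d) = \bigO(1)$ gossip rounds: in Phase~1 each node $v$ with $\dout < d$ contacts $d - \dout \le d$ uniformly random neighbours --- precisely the kind of contact the model permits (a union bound shows these samples are distinct for every node in every one of the $\bigO(\log n)$ rounds w.h.p., and otherwise $v$ resamples) --- and sends a request along each contact; since within the synchronous step every contacted node sees all the requests it gets that round, in Phase~2 it decides accept/reject and piggybacks the answer on the return message of each of those contacts. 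Thus the number of gossip rounds is $\bigO(1)$ times the number of \ALG\ rounds, i.e.\ $\bigO(\log n)$ w.h.p., and the total number of messages is $\bigO(1)$ times the total number of requests, i.e.\ $\bigO(n)$ w.h.p. The main obstacle is getting the conditional-acceptance fact right: it is the device that sidesteps the intricate dependencies among requests and across rounds, and it must be established in the strong ``after freezing all other randomness'' form, which is exactly what legitimises the per-slot geometric-decay arguments above. A secondary difficulty is the bounded-differences step, where a single node's choices may influence $\Theta(\log n)$ other slots through a single saturated node, so the concentration only bites above a $\polylog{n}$ threshold and the low-$M_t$ tail must be handled separately.
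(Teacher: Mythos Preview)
Your overall architecture matches the paper's: the ``freeze all other randomness'' acceptance bound for a single request, the per-slot geometric argument for the $\bigO(\log n)$ round bound, and then a bounded-differences argument to turn the expected geometric decay of $M_t$ into a w.h.p.\ statement, followed by a separate tail phase. The distributed-implementation paragraph is likewise the same idea.

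The one place where you diverge substantively is the Lipschitz step. You bound the effect of perturbing one node's choices by $\bigO(d\cdot\max_u r_u)$ and then invoke a max-load Chernoff bound to get $\max_u r_u=\bigO(\log n)$ w.h.p. The paper instead takes the independent variables to be the \emph{individual} request destinations and observes a \emph{deterministic} Lipschitz constant of $2cd$: if you move one request from $v_1$ to $v_2$, then either $v_2$ already had more than $cd$ incoming (so its reject decision is unchanged), or it had at most $cd$, in which case at most $cd$ requests at $v_2$ can flip; symmetrically for $v_1$. This buys two things. First, it is cleaner: no auxiliary max-load argument, and the concentration kicks in already at $M_t\ge nd/\log n$ rather than at $\sqrt{n}\,\polylog{n}$. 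Second, and more importantly, it is rigorous as stated: McDiarmid's inequality requires a deterministic bounded-difference constant, whereas your $\bigO(d\log n)$ bound holds only on a high-probability event. Your version can be salvaged (e.g.\ by a truncation/coupling argument that replaces the function by one that agrees with $M_{t+1}$ on the good event and is globally $\bigO(d\log n)$-Lipschitz), but as written this step is a small gap. Adopting the paper's $2cd$ observation removes the issue entirely.
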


\begin{theorem}\label{thm:main-expanders}
A sufficiently small constant $\epsilon > 0$ exists such that, for 
any constants $d\geq 44$ and  
$0 < \alpha \le 1$, for any sufficiently large $c$\footnote{We didn't try to optimize the constants in our analysis, which shows that $c \geq \max \{(\frac 2{\alpha})^2, 10e^{10d} \}$ suffices.}, 
and every $\Delta$-regular graph $G = (V, E)$ with $\Delta =
\alpha n$ and second largest eigenvalue of the adjacency matrix\footnote{I.e., $G$ is a sufficiently good
expander. Also note that, equivalently, we are imposing that the second largest
eigenvalue of $G$'s normalized adjacency matrix be at most $\epsilon\alpha^2$.}
$\lambda\leq\epsilon\alpha^2\Delta$,
\ALG$(G,d,c)$
returns a $(\nreq, c \nreq)$-almost regular $\epsilon$-expander $H =
(V,A)$, w.h.p.
\end{theorem}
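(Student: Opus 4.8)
\noindent We now sketch the plan for proving Theorem~\ref{thm:main-expanders}; the details are carried out in Section~\ref{ssec:overview} and beyond.

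\emph{Step 1: reduction to counting internal links.} The first move is to recast the conclusion combinatorially. Every node of $H$ has exactly $d$ out‑links, and every undirected edge of $H$ is a directed link either out of $S$ or into $S$; a one‑line computation then gives, for every $S\subseteq V$, the bound $\vol_H(S)\le 2ds+e_H(S,V-S)$, so that a non‑expanding set cannot have large volume. Consequently, if for a small constant $\eta\asymp\epsilon$ the number of directed links from $S$ that stay inside $S$ is at most $(1-\eta)ds$, then $e_H(S,V-S)\ge ds-(1-\eta)ds=\eta ds\ge\epsilon\,\vol_H(S)$, i.e.\ $S$ does not violate expansion. It thus suffices to prove that, w.h.p., \emph{every} $S\subseteq V$ with $s:=|S|\le n/2$ has at most $(1-\eta)ds$ internal out‑links; writing $\mathcal B_S$ for the complementary bad event, I will bound $\Pr[\mathcal B_S]$ and union‑bound over the $\binom ns$ sets $S$ and over $s\le n/2$. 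Throughout I condition on the event $\mathcal G$ of Theorem~\ref{thm:main-convergence} that \ALG$(G,d,c)$ halts within $\bigO(\log n)$ rounds after issuing $R=\bigO(dn)$ requests (so $\Pr[\overline{\mathcal G}]\le 1/n$), and, for each $S$, on an event $\mathcal R_S$ discussed in Step~4, that the requests issued by nodes of $S$ are rejected at most $\delta ds$ times, for a small constant $\delta$.

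\emph{Step 2: the encoding argument.} The randomness of \ALG\ is a sequence of independent choices, one per request, each uniform over the $\Delta$ neighbours of the requesting node; hence a fixed run $\tau$ issuing $R(\tau)$ requests occurs with probability exactly $\Delta^{-R(\tau)}$. Following the encoding/compression method of~\cite{MMR17}, it is enough to exhibit a prefix‑free, decodable encoding $\tau\mapsto\varphi(\tau)$ of the runs in $\mathcal B_S\cap\mathcal G\cap\mathcal R_S$ with $|\varphi(\tau)|\le R(\tau)\log_2\Delta-t$: then $\Pr[\mathcal B_S\cap\mathcal G\cap\mathcal R_S]=\sum_\tau\Delta^{-R(\tau)}\le\sum_\tau 2^{-|\varphi(\tau)|-t}\le 2^{-t}$, and taking $t\ge\log_2\binom ns+2\log_2 n$ makes the union bound over $S$ and $s$ cost at most $1/n$. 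The decoder will recover $\tau$ by replaying \ALG\ round by round on $G$ — legitimate because the accept/reject decisions are a deterministic function of the requests — so $\varphi$ need only supply, besides a description of $S$, the targets of all requests in the canonical (round, node, slot) order.

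\emph{Step 3: where the bits are saved.} Describing $S$ costs $\log_2\binom ns+\bigO(\log n)$ bits (including $s$ and a few bookkeeping counters). The targets are written as follows: a request from $v\notin S$, or a request from $v\in S$ whose target lies in $V-S$, is named among $v$'s $\le\Delta$ neighbours — the baseline cost; but a request from $v\in S$ whose target lies in $S$ is named among $v$'s $|N_v(S)|$ in‑$S$ neighbours, saving $\log_2(\Delta/|N_v(S)|)$ bits. Here the Expander Mixing Lemma (Lemma~\ref{le:exp_mixing_lemma}) with $\lambda\le\epsilon\alpha^2\Delta$ enters: $\sum_{v\in S}|N_v(S)|=2e(S,S)\le \Delta s^2/n+\epsilon\alpha^2\Delta s$, so the $|N_v(S)|$ are small on average (and when this right‑hand side is already $<(1-\eta)ds$ the event $\mathcal B_S$ is impossible). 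On $\mathcal B_S$ at least $(1-\eta)ds$ requests are of the savings type, and a convexity estimate over the worst‑case distribution of the $|N_v(S)|$'s — subject to $|N_v(S)|\le\Delta$ and at most $d$ internal links per vertex — yields a total saving of at least $(1-\eta)ds\cdot\log_2\frac{1-\eta}{s/n+\epsilon\alpha^2}\ge(1-\eta)^2\big(1-\bigO(\epsilon)\big)ds$ bits, essentially $ds$ bits. Against this I must charge: (a) $\bigO(\log n)$ bookkeeping plus $\bigO(1)$ for the handful of mixed‑radix roundings used to pack the targets without a per‑codeword ceiling loss; and (b) the bits identifying which of the $\le(1+\delta)ds$ requests from $S$ are of the savings type — but on $\mathcal B_S\cap\mathcal R_S$ all but $(\eta+\delta)ds$ of them are, so this pattern is a sparse subset costing only $\log_2\binom{(1+\delta)ds}{(\eta+\delta)ds}=\bigO\!\big((\eta+\delta)\log\tfrac1{\eta+\delta}\big)ds$ bits, which is $o(ds)$ once $\eta,\delta$ are small. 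Choosing $\epsilon$ small (so $\eta\asymp\epsilon$ is small and the saving is $\ge(1-\bigO(\epsilon))ds$), $c$ large, $\delta$ small, and $d\ge 44$, the net saving exceeds $\log_2\binom ns+2\log_2 n$, as required.

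\emph{Step 4: the rejection bound, and the main obstacle.} It remains to justify $\mathcal R_S$. Here I reuse the inequality behind Theorem~\ref{thm:main-convergence}: conditioned on all previous choices and on the other nodes' choices in the current round, any given request is accepted with probability at least $1/2$ (at most a fraction of a node's neighbours can be saturated in a round once $c$ is a large enough constant). Hence the number of rejections charged to a fixed $S$ is stochastically dominated by a negative‑binomial variable, and taking $c$ large enough that the per‑request rejection probability drops below $\delta/e$ — the regime $c\ge 10e^{10d}$ of the footnote — gives $\Pr[\text{more than }\delta ds\text{ rejections from }S\mid\mathcal G]\le e^{-\Omega(ds)}$, small enough to absorb into the union over the $\binom ns$ choices of $S$; equivalently, runs with many rejections are themselves compressible. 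I expect the genuine difficulty to lie in Steps~2--3 rather than here: honest decodability forces the decoder to replay \ALG\ round by round, so the short names of the in‑$S$ requests are interleaved with the full‑length names of all other requests, and the mixed‑radix packing must be arranged so that this interleaving does not cost $\Omega(1)$ extra bits per request — which would wipe out the $\approx 1$‑bit‑per‑request saving. Relatedly, the constant budget is tight precisely when $s$ is close to $n/2$, where the Expander‑Mixing saving degrades to barely one bit per internal link; this is what forces the quantitative hypotheses ($d\ge 44$, $\lambda\le\epsilon\alpha^2\Delta$ with $\epsilon$ small, $c$ large) and what must be chased carefully.
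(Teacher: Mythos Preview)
Your overall strategy—an encoding argument that saves bits on requests from $S$ landing in $S$ via the Expander Mixing Lemma—matches the paper, and Steps~1--3 are essentially right (the paper's Lemma~\ref{cl:sav_accepted} carries out your convexity estimate via Jensen). The gap is in Step~4, and it lies exactly where you say you expect \emph{no} difficulty.

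The bound $\Pr[\overline{\mathcal R_S}\mid\mathcal G]\le e^{-\Omega(ds)}$ cannot be absorbed into a union over the $\binom ns$ subsets when $s$ is small: for $s=\bigO(1)$ you would need this probability to be $n^{-\Omega(1)}$, but with $d$ and $c$ fixed constants $e^{-\Omega(ds)}$ is itself a constant. No constant choice of $c$ rescues this, since the exponent scales with $ds$ while the union-bound cost scales with $s\log(n/s)$. Your parenthetical ``equivalently, runs with many rejections are themselves compressible'' is the right instinct, but it must be made to work \emph{inside the encoding for each $S$}, not invoked as a separate tail bound—and without it, the pattern cost in your item~(b) blows up, because $L_S$ can be as large as $\Theta(ds\log n)$ on $\mathcal G$.

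This is precisely what the paper does. Rather than conditioning on few rejections, it compresses the \emph{destinations} of rejected requests: every rejected request lands on a node that is either \emph{semi-saturated} (already has $\ge cd/2$ incoming links counting the current round's requests from $V-S$; there are at most $2n/c$ such nodes and the set is inferable by the decoder from already-decoded information) or \emph{critical} (not semi-saturated yet still rejecting, hence receiving $>cd/2$ requests from $S$ in that round; these sets are written out explicitly in a side table, but their description cost is charged against the very rejections from $S$ that create them). Each rejected destination then costs roughly $\log(n/c)$ bits instead of $\log\Delta$, and for $c$ large this saving absorbs the $\sum_{v\in S}\big(\bigO(\log\ell_v)+\log\binom{\ell_v}{d}\big)$ overhead of recording which of $v$'s requests were accepted (Lemma~\ref{lemma:ub_second}). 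The encoding thus self-balances for every $s$ without any auxiliary event $\mathcal R_S$; this semi-saturated/critical decomposition is the main technical idea you are missing.
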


The proof of Theorem~\ref{thm:main-convergence} is given in
Section~\ref{subsec:expconvtime}, while the proof of
Theorem~\ref{thm:main-expanders}, which is our main technical contribution, is
described in Section~\ref{sub:expexp}.

\section{Proof of Theorem~\ref{thm:main-convergence}} \label{subsec:expconvtime}
Throughout this section, we consider a $\Delta$-regular graph $G = (V,E)$ with
$\Delta = \alpha n$ for some arbitrary constant $0 < \alpha < 1$. We analyze
the execution of Algorithm~\ALG\ on input $G$ for any constants $d \geq 1$ and
$c > 1/\alpha$.
  
Recall that, according to the process defined by \ALG, each node $v$ asks for
$d$ \emph{link requests} to its neighbors and has $cd$ \emph{slots} to
accomodate link requests from its neighbors. 

We first provide a simple proof that \ALG\ on input $G=(V,E)$ terminates within
a logarithmic number of rounds\footnote{Notice that the meaning of \emph{round}
here is exactly that defined in the pseudocode of \ALG.}, w.h.p. 
 
\begin{lemma} \label{lem:termination}
For every $d \geq 1$, every $c > 1/\alpha$, and every $\beta > 1$, \ALG$(G,d,c)$
completes the task within $\beta \log(n) / \log(\alpha c)$ rounds, with probability at
least $1 - d/n^{\beta-1}$.
\end{lemma}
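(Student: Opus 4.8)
The plan is to reduce the whole statement to one deterministic fact: in every round of \ALG, fewer than $n/c$ nodes reject. To set up the bookkeeping, fix a round $t$, let $H_t$ be the directed graph built so far, and let $L_t = |E(H_t)|$. Each node $v$ issues $d - \dout$ requests in round $t$, so the total number of requests is $Q_t = \sum_v (d - \dout) = dn - L_t$, using that $\sum_v \dout = \sum_u \din = L_t$. Writing $q_u$ for the number of requests node $u$ receives, we have $\sum_u q_u = Q_t$, and $u$ rejects iff $q_u > cd - \din$; summing this over the set $R_t$ of rejecting nodes and using $\sum_{u\in R_t}\din \le L_t$ gives
\[
Q_t \;=\; \sum_u q_u \;\ge\; \sum_{u\in R_t} q_u \;>\; cd\,|R_t| - \sum_{u\in R_t}\din \;\ge\; cd\,|R_t| - L_t ,
\]
hence $dn - L_t > cd\,|R_t| - L_t$, i.e. $|R_t| < n/c$. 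Running the identical count after freezing all of round $t$'s random destinations except that of one designated request shows, likewise, that the set $F$ of potential destinations that would cause that request to be rejected (the nodes $u$ with $q_u \ge cd - \din$ once the designated request is removed) also satisfies $|F| < n/c$.

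Next I would track, for a fixed node $v$, the number $U_t$ of outgoing links $v$ still lacks after round $t$: here $U_0 = d$, and in round $t$ node $v$ issues exactly $U_{t-1}$ requests, one per missing link, with $U_t$ equal to the number of them that are rejected. Fix one of these $U_{t-1}$ requests and condition on the history through round $t-1$ together with all the other random destinations drawn in round $t$. Then this request's destination is (essentially) uniform among $v$'s $\Delta$ neighbours, and by the counting above it is rejected only if that destination lies in $F$, a set of size $< n/c = \Delta/(\alpha c)$ since $\Delta = \alpha n$. So the conditional rejection probability is at most $1/(\alpha c)$ — exactly so if the $d-\dout$ neighbours are drawn with replacement, and otherwise up to lower-order terms that I suppress. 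Taking conditional expectations and summing over the $U_{t-1}$ requests yields $\Expcc{U_t \mid \mathcal F_{t-1}} \le U_{t-1}/(\alpha c)$, and iterating gives $\Expcc{U_t} \le d\,(\alpha c)^{-t}$.

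From here I would conclude as follows. Markov's inequality gives $\Prc{v \text{ unfinished after round } t} = \Prc{U_t \ge 1} \le \Expcc{U_t} \le d\,(\alpha c)^{-t}$, and a union bound over the $n$ nodes shows \ALG\ is still running after round $t$ with probability at most $nd\,(\alpha c)^{-t}$. Choosing $t = \beta \log n / \log(\alpha c)$ makes $(\alpha c)^{-t} = n^{-\beta}$, so this probability is at most $d\,n^{1-\beta} = d/n^{\beta-1}$, as claimed; the hypothesis $c > 1/\alpha$ enters only to guarantee $\alpha c > 1$, so that $\log(\alpha c)>0$ and the bound decays geometrically.

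The only delicate point, and the main obstacle, is the conditioning step: the set of rejecting nodes in round $t$ genuinely depends on where the designated request is sent, so one cannot just condition on ``everything else'' and read off a fixed rejecting set. This is exactly what the deterministic count, applied with the designated request removed, is for — no matter how the remaining randomness is fixed, at most $n/c$ of $v$'s neighbours would kill that request, so its rejection probability is at most $(n/c)/\Delta$ come what may. That is the device that lets us sidestep the correlations among links; the rest is a double count, a one-line drift estimate, Markov, and a union bound.
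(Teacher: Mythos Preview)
Your proof is correct and follows essentially the same route as the paper: both hinge on the deterministic count that at most $n/c$ nodes can reject in any round (the paper states this tersely as ``there are always at most $nd/(cd)=n/c$ nodes with $cd$ or more incoming link requests''), and both exploit it by conditioning on every destination except one to bound a single request's rejection probability by $1/(\alpha c)$. The only cosmetic difference is that the paper tracks each of the $nd$ links individually and union--bounds over links, whereas you aggregate to the per--node deficit $U_t$, apply a drift estimate plus Markov, and union--bound over nodes; the arithmetic lands on the same $d/n^{\beta-1}$.
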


\begin{proof} 
Let us fix an arbitrary ordering of the $nd$ required links and, for $i = 1,
\dots, nd$, let $X_i^{(t)}$ be the binary random variable taking value $1$ if
link $i$ is settled at the end of round $t$ and $0$ otherwise. 

First note that, since a link is settled at some round $t$ if it was
already settled at previous round $t-1$, it holds that
\begin{equation}\label{eq:linkrequesti}
\Prob{}{X_i^{(t)} = 0} = \Prob{}{X_i^{(t)} = 0 \;|\; X_i^{(t-1)} = 0}
\Prob{}{X_i^{(t-1)} = 0} \, .
\end{equation}
Let us name $\bY^{(t)}_{-i} = \left(Y_1^{(t)}, \dots, Y_{i-1}^{(t)},
Y_{i+1}^{(t)}, \dots, Y_{nd}^{(t)}\right)$ the random vector where, for each $j
\neq i$, random variable $Y_j^{(t)}$ indicates the destination node of 
link $j$ at round $t$. Observe that for every vector $\by_{-i} = (y_1, \dots,
y_{i-1}, y_{i+1}, \dots, y_{nd}) \in V^{nd -1}$ it holds that
\begin{equation}\label{eq:linkrequestscond}
\Prob{}{X_i^{(t)} = 0 \;|\; X_i^{(t-1)} = 0, \, \bY_{-i}^{(t)} = \by_{-i}} 
\leq \frac{1}{\alpha c} \, .
\end{equation}
Indeed, given any $\by_{-i} \in V^{nd -1}$, there are always at most $nd/(cd) =
n/c$ nodes with $cd$ or more incoming link requests. Hence, among the $\alpha n$
neighbors of the node asking link $i$, at least $(\alpha - 1 /c)n$ have less
than $cd$ incoming requests. Hence, the probability that link $i$ settles is
at least $1 - 1/(\alpha c)$. Since~\eqref{eq:linkrequestscond} holds for any choice of 
 $\by_{-i} \in V^{nd -1}$,
  we get that
$\Prob{}{X_i^{(t)} = 0 \;|\; X_i^{(t-1)} = 0} \leq 1/(\alpha c)$ and thus 
from~\eqref{eq:linkrequesti} we have that
$\Prob{}{X_i^{(t)} = 0} \leqslant 1/(\alpha c)^t$.
The thesis then follows from a union bound over all the $nd$ links and 
from the fact that $t \geqslant \beta \log(n) / \log (\alpha c)$.
\end{proof}

\smallskip \noindent \textbf{Remark.} The first proof we gave for the above
lemma was based on a simple compression argument~\cite{MMR17}. We describe it
in Appendix~\ref{app:time} since it can be used by the reader as a ``warm-up''
for the more difficult analysis given in Section~\ref{sub:expexp}.

\smallskip The time complexity of Algorithm \ALG\ is asymptotically bounded by
the total number of link requests produced by its execution on graph $G =
(V,E)$.  Lemma~\ref{lem:termination} easily implies that this number is
$\bigO(dn \log n)$, w.h.p. In the next lemma we prove a tight $\bigO(nd)$
bound.
 
\begin{lemma} \label{lem:time}
For every constants $d \geq 1$ and $c > 2 / \alpha$, the total number of link
requests made by \ALG$(G,d,c)$   (and thus the time complexity) is
$\Theta(n)$, w.h.p.
\end{lemma}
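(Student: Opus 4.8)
The plan is to show that the expected number of ``missing links'' decreases geometrically, round by round, and then to boost this expectation statement to a high-probability statement using a concentration inequality for Lipschitz functions of independent random variables. Let me set up the key quantity. For a round $t$, let $M^{(t)} = dn - \sum_v \dout$ be the number of link requests still to be settled at the start of round $t$ (so $M^{(1)} = dn$ and the algorithm terminates when $M^{(t)} = 0$). The total number of link requests made over the whole execution is exactly $\sum_{t\ge 1} M^{(t)}$, since in round $t$ each node $v$ with current outdegree $\dout < d$ issues exactly $d - \dout$ requests. So it suffices to show $\sum_t M^{(t)} = \Theta(n)$ w.h.p.; the lower bound is trivial (round $1$ alone contributes $dn$), so the work is the upper bound.

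First I would prove the drift bound: conditioned on the state at the start of round $t$, $\Expcc{M^{(t+1)} \mid M^{(t)}} \le \rho \cdot M^{(t)}$ for some constant $\rho < 1$. This is essentially the calculation already done in Lemma~\ref{lem:termination}: each of the $M^{(t)}$ outstanding requests in round $t$, conditioned on everything else in round $t$, is rejected with probability at most $1/(\alpha c) \le 1/2$ (using $c > 2/\alpha$); hence the expected number of requests that survive to round $t+1$ is at most $M^{(t)}/(\alpha c)$, giving $\rho = 1/(\alpha c) < 1/2$. Iterating, $\Expcc{M^{(t)}} \le \rho^{\,t-1} dn$, so $\Expcc{\sum_t M^{(t)}} \le dn/(1-\rho) = O(n)$, which already gives the ``in expectation'' version. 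The main obstacle is upgrading this to w.h.p.: the trouble is that $M^{(t+1)}$ depends on the random destinations chosen in round $t$ in a way with complicated correlations across rounds, so one cannot naively multiply tail bounds.

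The way I would handle concentration is the following. Work one round at a time. Fix the configuration at the start of round $t$, with $m := M^{(t)}$ outstanding requests; the randomness of round $t$ is the tuple of independent uniform choices of destinations, one (actually $d-\dout$) per node, which we may view as $m$ independent random variables $Z_1,\dots,Z_m \in V$, one per outstanding request. Then $M^{(t+1)} = f(Z_1,\dots,Z_m)$ where $f$ counts requests that get rejected (a request is rejected iff its destination $u$ receives, in total during round $t$, more than $cd - \din$ requests). Changing one $Z_i$ changes, for at most two nodes, whether they overflow, and flipping a node's overflow status changes the count of rejected requests by at most $cd$; so $f$ has bounded differences with constant $cd$. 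By McDiarmid's inequality (or Azuma), $M^{(t+1)} \le \Expcc{M^{(t+1)}\mid M^{(t)}} + O\big(\sqrt{m \log n}\big)$ except with probability $n^{-\omega(1)}$, uniformly over the fixed starting configuration. Combining with the drift bound, $M^{(t+1)} \le \rho M^{(t)} + O(\sqrt{M^{(t)}\log n})$ w.h.p. Now run this recursion: as long as $M^{(t)} \ge C\log n$ for a large constant $C$, the additive term is dominated and $M^{(t+1)} \le \rho' M^{(t)}$ for some constant $\rho' \in (\rho,1)$ w.h.p.; so after $O(\log n)$ rounds we reach $M^{(t)} = O(\log n)$ (consistent with Lemma~\ref{lem:termination}), contributing $\sum_{t: M^{(t)}\ge C\log n} M^{(t)} = O(n)$ to the total by the geometric sum, and the remaining $O(\log n)$ rounds each contribute $O(\log n)$ requests, i.e.\ $O(\log^2 n)$ more, which is $O(n)$. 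A union bound over the $O(\log n)$ rounds keeps the total failure probability $n^{-\Omega(1)}$, yielding $\sum_t M^{(t)} = O(n)$ w.h.p., and together with the trivial lower bound this gives the claimed $\Theta(n)$.
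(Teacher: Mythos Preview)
Your argument is correct and follows essentially the same approach as the paper: bound the expected per-round shrinkage of the number of outstanding requests, apply the method of bounded differences (McDiarmid) within each round to get concentration, and split the execution into a first phase where this concentration applies plus a tail phase handled by Lemma~\ref{lem:termination}. The only cosmetic differences are your choice of cutoff ($C\log n$ versus the paper's $nd/\log n$) and two minor slips: the Lipschitz constant your own reasoning yields is $2cd$ rather than $cd$, and with deviation $O(\sqrt{m\log n})$ the per-round failure probability is $n^{-\Omega(1)}$, not $n^{-\omega(1)}$.
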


\begin{proof} 
Let us fix an arbitrary ordering of the $nd$ required links and, for $i = 1,
\dots, nd$, let $Z_i^{(t)}$ be the binary random variable taking value $1$ if
link $i$ is not yet settled at the beginning of round $t$ and $0$ otherwise. 
The random variable indicating the total number of link requests produced by 
the algorithm can thus be written as
\[
Z = \sum_{t = 0}^\infty \sum_{i = 1}^{nd} Z_i^{(t)} \, .
\]
Proceeding as in the proof of Lemma~\ref{lem:termination}, it is easy to see
that for every $t \in \mathbb{N}$ it holds that
\[
\Expec{}{\sum_{i = 1}^{nd} Z_i^{(t)}} \leqslant \frac{nd}{(\alpha c)^t} \,.
\]
Hence, the total expected number of link requests is $\Expec{}{Z} \leq
\frac{\alpha c}{\alpha c - 1} \, nd$. 

In order to prove that $Z = \bigO(nd)$ w.h.p., we first show that whenever the
number of unsettled links is above $nd / \log n$, it decreses by a constant
factor, w.h.p. Formally, for any $k \geqslant nd / \log n$, we derive the 
following inequality

\begin{equation}
\Prob{}{\sum_{i=1}^{nd} Z_{i}^{(t)} >  \frac{k}{\alpha c / 2}  \; \middle| \;
\sum_{i=1}^{nd} Z_{i}^{(t-1)} = k}
\leq e^{-\frac{k}{2 \alpha^2 c^4 d^2}} \, .
\label{eq:concentration}
\end{equation}

Notice that random variables $Z_{i}^{(t)}$ conditional on the graph formed by
the links settled at the end of round $t-1$ are not independent, so we cannot
use a standard Chernoff bound. However, we can use the \emph{method of bounded
differences}~\cite[Corollary 5.2]{DP09} (see Theorem~\ref{thm:mobd} in
Appendix~\ref{app:tools}), since the sum of the $Z_i^{(t)}$ conditional on the
graph of the $nd-k$ settled links at the end of the previous round can be
written as a $2cd$-Lipschitz function of the independent $k$ random variables
indicating the link requests at round $t$. 

In more details, we name $u^{(t-1)}$ the set of $k$ unsettled links at the end
of round $t-1$ and consider random variables $\{Y_{i}\} _{i \in u^{(t-1)}}$,
each of them returning the node-destination index that the non-assigned link
request $i$ tries to connect to. Observe that $Y_i$'s are mutually independent
and, moreover, the sum in (\ref{eq:concentration}) can be written as a
deterministic function of them: $\sum_{i=1}^{nd} Z_{i}^{(t)} =
f\left(Y_{i_1},\dots, Y_{i_k}\right)$.

Moreover, this function is $2cd$-Lipschitz w.r.t. its arguments: If we change
one of the arguments $Y_{i}$, we are moving a request $i$ from a node $v_{1}$
to a node $v_{2}$. The largest impact this can have on $\sum_{i=1}^{nd}
Z_{i}^{(t)}$ is that the response for each of all the link requests sent to
$v_{2}$ changes. However, if this number was already larger than $cd$, then the
moving of link request $i$ would not have any impact. This means that, in the
worst-case, at most $cd$ link requests trying to connect to $v_{2}$ switch from
assigned to non-assigned.
At the same time, a symmetric argument holds for the link requests
trying to connect to $v_{1}$. In formulas, for all vectors of nodes
$\left(v_{i_1},\dots,v_{i_j},\dots,v_{i_k}\right)$ and
$\left(v_{i_1},\dots,v_{i_j}^{\star},\dots,v_{i_k}\right)$ differing
only on a single entry $i_j$, it holds that 
\[
\left|f\left(v_{i_1},\dots,v_{i_j},\dots,v_{i_k}\right)-f\left(v_{i_1},\dots,v_{i_j}^{\star},\dots,v_{i_k}\right)\right|\leq 2cd \, .
\]
Therefore, by applying Corollary 5.2 in \cite{DP09} (see also
Theorem~\ref{thm:mobd} in Appendix~\ref{app:tools}), with $\mu \leq M = k/(\alpha
c)$ and $\beta_j = 2cd$ for all $j = 1, \dots, k$ we get
(\ref{eq:concentration}).

From (\ref{eq:concentration}) and the chain rule, it follows that, for $T =
\bigO\left( \frac{\log\log n}{\log\left(\alpha c/2 \right)}\right)$ rounds, the
number of unassigned link requests decreases by a factor $\alpha c / 2 > 1$ at
each round, w.h.p., until it becomes smaller than $nd/\log n$. These rounds
thus account for $nd \sum_{t=0}^{T}\left(\frac{2}{\alpha c} \right)^{t} =
\bigO(nd)$ connection requests, w.h.p. Then, from Lemma~\ref{lem:termination}
it follows that the remaining $\frac{nd}{\log n}$ link requests are assigned
within $\bigO(\log n)$ rounds, w.h.p., thus accounting for at most further
$\bigO(n d)$ additional link requests, w.h.p.  
\end{proof}

\subsubsection*{Distributed implementation}
As one can easily verify from its pseudocode, Algorithm~\ALG\ is designed to
work over any synchronous parallel distributed model  where  the  nodes of the
input graph $G=(V,E)$ are the local computing units which can communicate via
the bidirectional links defined by the set of edges $E$. We remark that, at
every round, each node contacts (i.e. sends link requests to) only a constant
number of its neighbors. It thus follows that \ALG\ induces a decentralized
protocol that can be implemented on the  communication-constrained
\emph{uniform} \gossip\ model~\cite{CHKM12,H13}. Notice that the protocol
does not require any global labeling of the nodes, rather, it requires
that each node knows some local labeling of its bi-directional ports.   

In this setting, Lemma~\ref{lem:termination} easily implies that every node
completes all of its tasks within $\Theta(\log n)$ rounds, w.h.p.

As for communication complexity, we observe that all the point-to-point
communications made by the protocol can be encoded with 1-bit messages
(accept$/$reject the link request). Moreover, Lemma~\ref{lem:time} implies that
the overall number of links requests (and thus of exchanged messages) is
w.h.p. $\Theta(dn)$, which is clearly a tight bound for this task. 

Finally,  we notice that if nodes
know an upper bound $n'$ on $n$, since $G$ is regular, then
they can locally derive a sufficiently good lower bound of $\alpha$, i.e.,
$\alpha' = \alpha/\poly(n)$. Then, by Lemma~\ref{lem:termination}, after round
$T=2 \log(n') / \log(\alpha' c)$, every node can decide to stop any action (so
it terminates) and it will be aware that the protocol has completed the global
task, w.h.p.

\section{Proof of Theorem~\ref{thm:main-expanders}} \label{sub:expexp}
In the previous subsection, we showed that, after $T = \bigO(\log n)$ rounds,
Algorithm~\ALG\ stabilizes to a subgraph $H=(V,E_H)$ of the input graph
$G=(V,E)$ that turns out to be a $(\nreq, c\nreq)$-almost regular graph. In
this Section, we provide the proof of Theorem~\ref{thm:main-expanders}: we
indeed show that if $G$ is an expander then  $H=(V,E_H)$ turns out to be also an  
expander, w.h.p.  The proof proceeds by showing that $\Prob{}{(\text{\ALG\ completes in $T$ rounds})\wedge (\text{$H$ is not an $\epsilon$-expander})} = \bigO(\frac{1}{n^\gamma})$ for a constant $\gamma$. Combined with Theorem \ref{thm:main-convergence}, this proves Theorem \ref{thm:main-expanders}, with $T = \bigO(\log n)$.

In the next subsection we sketch the main  arguments we use in the proof. Then in
the successive subsections we provide the  detailed proof.

\subsection{Overview of the proof}\label{ssec:overview}
The probability distribution of the links yielded by \ALG, and the ways in
which the links are correlated, are very difficult to analyze. In order to cope
with such technical issue, we prove Theorem~\ref{thm:main-expanders} by using a
compression argument: we show that the random choices of the algorithm that
lead to a non-expading graph can be non-trivially compressed, and hence have
low probability. 

We think of each node as having access to a sequence of $Td\log \Delta$ random
bits and the protocol as being deterministic as a function of these $n$ local
sequences of random bits (see Fig.~\ref{fig:uncompr}). We will show that any
sequence of $nTd \log \Delta$ bits leading the protocol to stabilize within $T$
rounds to a non-expanding graph can be losslessly described using $nTd \log
\Delta - \Omega(\log n)$ bits. This will prove that the protocol stabilizes to
an expanding graph with high probability.

Let $R \in \{0,1\}^{nTd\log \Delta}$ be a bit string that leads to a
non-expanding set, i.e., a set $S$, with size $|S| = s \leq n/2$, having at
most $\epsilon |S| d$ outgoing links in $H$. The compression of such a bit
string $R$ is based on two main ideas. Since the number of links in $E_H$ with both
endpoints inside $S$ is large, the first main idea is to use less than $\log
\Delta$ bits to encode the destination of each accepted requests originated
from nodes in $S$ whenever this destination belongs to $S$. The second main
idea is to encode the destinations of rejected requests with less than $\log
\Delta$ bits. Indeed, roughly speaking, for each link request that gets
rejected at some round, there are at least further $cd$ link requests (in the
current or previous rounds) towards the same ``bad'' destination. Since there
are a total of $dn$ requests that need to be accepted and a total of $cdn$
available accepting slots, the number of such ``bad'' destinations needs to be
small, thus the destinations of rejected requests may be compressed.

For clarity sake, we think of the original available randomness $R \in \{0,
1\}^{nTd\log\Delta}$ organized as an $n \times Td$ matrix $\sM$, where each
entry is a block of $\log \Delta$ bits (see Fig.~\ref{fig:uncompr}). The
compressed counterpart of $\sM$, denoted as $\sC$ (see Fig.~\ref{fig:compr}),
consists of three tables (a detailed description of each table can be found in
the next subsection).

In order to implement the first idea, for each node $v \in S$ we need to
identify, in its row of the uncompressed representation $\sM$, which slots of
$\log \Delta$ bits refer to destinations in $S$ of accepted link requests. Notice
that this cannot be done naively, indeed even just identifying the set of slots
of accepted requests would naively require $\log\binom{Td}{d}$ bits.  However,
we can first encode the number $\ell_v$ of used slots, with a prefix free
encoding\footnote{See, e.g., \textit{Elias $\delta$-coding}
(\url{https://en.wikipedia.org/wiki/Elias_delta_coding}).} requiring at most
$\log \ell_v$ bits and then the set of $d$ slots referring to accepted requests
by using $\log \binom{\ell_v}{d}$ bits (see Fig.~\ref{fig:compr}: Field~1 in
Table~2). While for some ``unlucky'' nodes $\ell_v$ can be large, the overall
amortized number of bits $\sum_{v \in S} \left[ \bigO(\log \ell_v) + \log
\binom{\ell_v}{d} \right]$ turns out smaller than $s \log\binom{Td}{d}$.  Once
we have identified the set of accepted requests, we can identify the set of
those referring to destinations in $S$ (see Fig.~\ref{fig:compr}: Field~2 in
Table~2) and, finally, we can encode each of those destinations by using $\log
[(1-\delta_v) \Delta]$ bits instead of $\log \Delta$ bits, where $\delta_v$ is
the fraction of neighbors of $v$ outside $S$ (see Fig.~\ref{fig:compr}: Field~3
in Table~2). Notice that we can identify which requests of each node end up
inside and outside $S$ by encoding the set $S$ itself, once and for all, using
$\bigO(\log s) + \log\binom{n}{s}$ bits (see Fig.~\ref{fig:compr}: Table~1).

In order to implement the second main idea, for each node $v \in S$ we need to
identify the destinations of its $\ell_v - d$ rejected link requests. Notice
that each rejected request ends up on a node, say $w$, receiving at least
further $cd$ requests. Those further requests include requests
\textit{accepted} by $w$ in some previous round and requests \textit{rejected}
by $w$ in the current round. We exploit that property to reduce the number of
bits used to encode such destinations: roughly speaking, at each round $t$ we
distinguish between \textit{semi-saturated} and \textit{critical} nodes. We
call semi-saturated at round $t$ a node that already accepted at least $cd/2$
requests up to round $t-1$. Notice that (i) the number of semi-saturated nodes
can never exceed $2n/c$ and (ii) we already know the set of semi-saturated
nodes at round $t$, if we know the accepted requests of all nodes up to round
$t-1$.  Hence, we can encode each request to a semi-saturated node by using
only $\log(2n/c)$ bit (notice that this is smaller than $\log \Delta$ whenever
$c > 2 / \alpha$). In order to distinguish which ones of the $\ell_v -d$
rejected destinations refer to semi-saturated nodes and which ones refer to
critical nodes we use further $\ell_v - d$ bits. Finally, for critical nodes
(i.e., destinations of rejected requests that are not semi-saturated) we first
encode once and for all the set of such nodes at each round, using $\bigO(\log
c_t) + \log \binom{n}{c_t}$ for each round $t$, so that we can encode the
destination of a rejected request toward a critical node at round $t$ using
only $\log c_t$ bits.

Summing up all the contributions involved (see
Section~\ref{ssec:rateofcompression} for all the details) we end up encoding
a string $R \in \{0,1\}^{nTd\log \Delta}$ leading to a non-expanding set with
a bit string of length $nTd\log \Delta - \Omega(\log n)$. Thus the overall
number of bit strings leading to non-expanding sets is at most an
$\bigO(n^{-c})$ fraction of all the bit strings, for some $c > 0$.

\begin{figure}
\centering
\includegraphics[width=5cm]{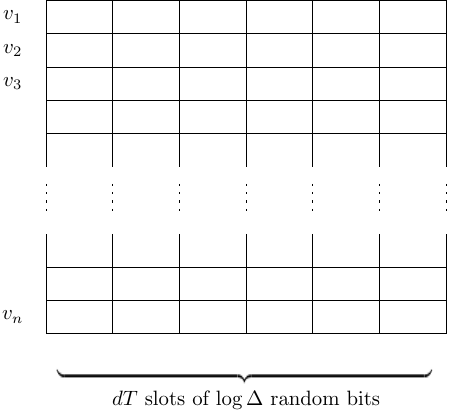}
\caption{Uncompressed representation $\mathcal{M}$ of $R$} \label{fig:uncompr}
\end{figure}

\begin{figure}
\centering
\includegraphics[width=12cm]{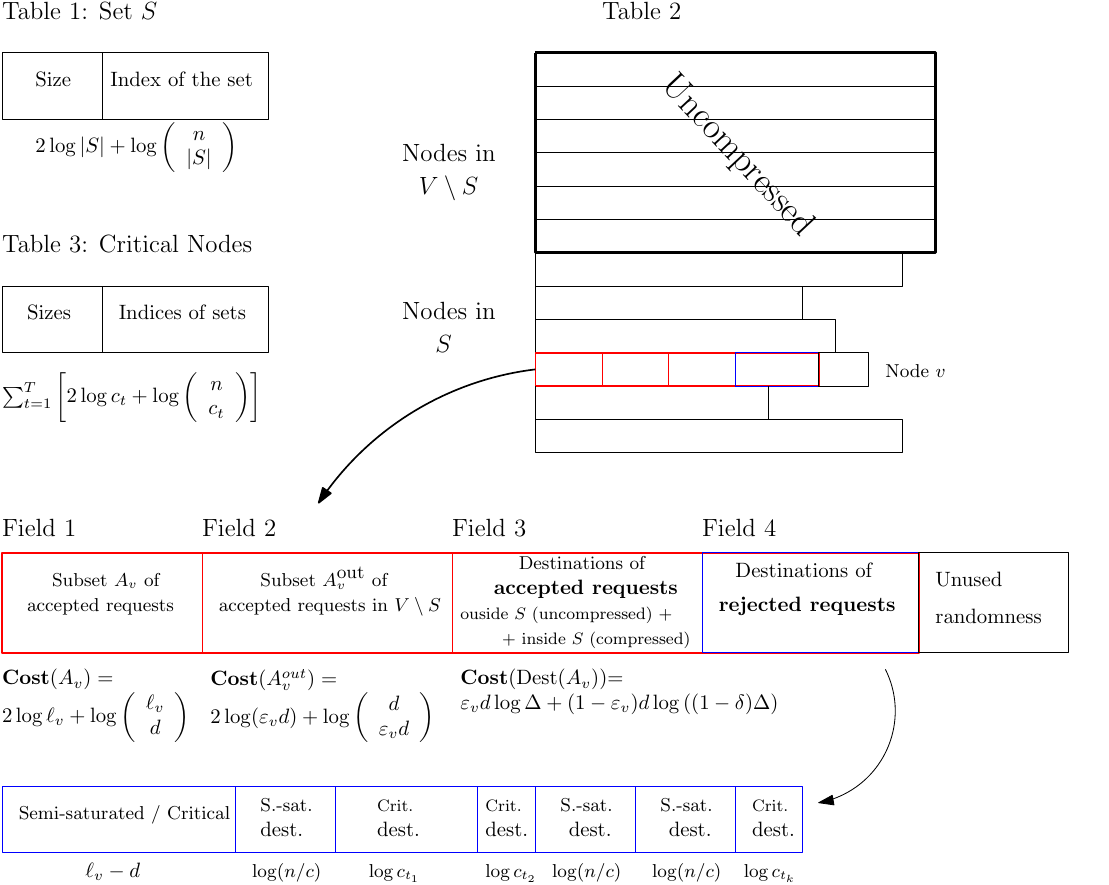}
\caption{Compressed representation $\mathcal{C}$ of $R$} \label{fig:compr}
\end{figure}

%

\subsection{The compressed representation: Full description}
\label{ssec:description}
We use the following notation throughout the remainder of the paper.  For a
node $v\in S$, we denote by $\out_v$ the fraction of $v$'s edges in $E$ that
have an end-point in $V - S$, i.e., 

\begin{equation} \label{eq:defeps&delta}
\out_v \cdot \Delta \ = \   e_G(v, V - S)  \, \mbox { and } 
\, \out \, =\,  \frac{1}{s}\sum_{v \in S}\out_v \, . 
\end{equation} 
We also denote by $\freq_v$ the fraction of $v$'s accepted link requests (so
edges of subgraph $H$) with end-points in $V - S$, i.e., 
\begin{equation}\label{eq:fractions}
\freq_v \cdot d \, = \  e_H(v, V-S)  \, \mbox{ and } 
\, \freq = \frac{1}{s}\sum_{v\in S}\freq_v  \, . 
\end{equation}

\medskip In the paragraphs that follow, we describe how the evolution of the
protocol is encoded in the presence of a non-expanding subset $S$ (with $|S| =
s\le n/2$ without loss of generality).

In the remainder, we repeatedly use the following facts:
\begin{itemize}
\item {\bf Node numbering:} when representing destinations of link requests 
submitted by nodes of the network, we can use the fact that the encoding 
and decoding algorithms have full knowledge of the underlying graph. In 
particular, we assume a total ordering of the nodes is defined, so that 
a node $u$ is simply specified by an integer in $\{1, ... , n\}$, 
denoting $u$'s position in this ordering. At the same time, we can use 
a local numbering to represent the neighbors of a given node $v$. For 
example, if $v$'s neighbors are the nodes $\{2, 5, 8\}$ with respect to 
the global ordering, node $5$ can be represented as $2$ with respect to 
$v$, i.e., the second neighbor of $v$ with respect to the global ordering. 
\item {\bf Subset encoding:} given the set $[k]$ of the first $k$ 
integers, we represent any subset $S\subset [n]$ by its position $i$ in 
the lexicographic order of all subsets of $[n]$ of size $|S|$. In order 
to completely specify $S$, we separately encode its size in a prefix-free way 
using $2\log|S|$ bits, and its position $i$ in the lexicographic order 
using $\log\binom{n}{|S|}$ bits. 
\end{itemize}

We next discuss the compressed encoding we use. We remark that, as argued in \cite[Section 7]{MMR17}, we can avoid taking ceilings in the expressions which measure the number of bits necessary for the encoding. 

\paragraph{Unused randomness.} For every node $v$, we have enough randomness to
describe exactly $dT$ choices.  If $v$ completes its execution of the protocol
after performing $\ell_v$ requests, the remaining randomness (corresponding to
$dT - \ell_v$ requests) is not used. This unused randomness is both present in
the uncompressed representation $\sM$ and in its compressed counterpart $\sC$
and is represented as is, thus corresponding to $\sum_{v\in V} (dT -
\ell_v)\log\Delta$ bits in both $\sM$ and $\sC$.

\paragraph{Table 1: The set $S$.} We represent $S$ in $\sC$ by writing the
number $s:= |S|$ in a prefix-free way using $2\log s$ bits, and then writing
the number $k$ such that $S$ is the $k$-th set of size $s$ in lexicographic
order, which takes $\log {n \choose s}$ bits. Using   prefix $\delta$-codes, in
total, the cost to encode $S$ is
\begin{equation}\label{reps}
\bin(S) = 2\log s + \log {n \choose s} \, .
\end{equation}

\paragraph{Table 2, upper part: Randomness of nodes in $V-S$.} We represent the
randomness of nodes in $V-S$ as it is, with no gain or loss. Notice that,
thanks to  Table 1, a decoder can infer that the first $(n-s)$ rows of Table 2
(see Fig \ref{fig:compr}) describe the executions of every node in $V - S$.
This is a fixed-length encoding formed by $(n-s)$ rows, each consisting of $dT$
blocks of $\log\Delta$ bits:    hence a decoder  knows where the lower portion
of $\sC$ encoding the executions of nodes in $S$ begins.

\smallskip For every node $v \in S$, the lower part of Table 2 contains a
\emph{variable-length} row, in turn consisting of a set of consecutive
\emph{fields}, which encode the following information.

\paragraph{Table~2, Field~1: Subset $A_v$ of requested links originating from $v$ that
are accepted.} This field consists of two parts. In the first part we  write,
in a prefix free way, the number $\ell_v$, using $2\log \ell_v$ bits. As a
second part of this field, we specify the subset of the $d$ accepted link requests among
the $\ell_v$ submitted by $v$.\footnote{Recall that we are encoding executions
of the algorithm that terminate within $T$ rounds.} To this purpose, we again
encode the integer $i$, such that the $d$ accepted requests correspond to the
$i$-th subset of $\{1, \ldots , \ell_v\}$ of size $d$, in lexicographic order.
The overall cost incurred for this field is thus 
\begin{equation}
\label{repaccs}
\bin(A_S) = \sum_{v\in S}\bin(A_v) 
= \sum_{v\in S} 2 \log\ell_v + \log {\ell_v \choose d} \, .
\end{equation}

\noindent \emph{Remark:} note that this field allows to iteratively infer the
round in which each request was submitted by $v$. Also notice that the subset
of rejected link requests originating from $v$ can be derived as the complement
of subset $A_v$.

\paragraph{Table~2, Field~2: Subset $A_v^{\mbox{out}} \subseteq A_v$ of
accepted links originating from $v$ to $V-S$.} For a  node $v\in S$, recall
that $\epsilon_v d$ is the number of outgoing accepted  links from $v$ into
$V-S$. In this field, we  encode the  subset $A_v^{\mbox{out}}$    of  such
accepted links, using the same encoding used for subset $A_v$ in the first
field. We can thus recover the relative positions of such accepted requests in
the overall sequence of the $\ell_v$ requests made by $v$. In total, this cost
is
\begin{equation}
\label{repcut}
\bin(A_S^{\mbox{out}}) 
= \sum_{v\in S}\bin(A_v^{\mbox{out}}) 
= \sum_{v \in S} 2 \log (\epsilon_v d) + \log {d \choose \epsilon_v d } \, .
\end{equation}

\noindent \emph{Remark:} note that the encoding of $A_v^{\mbox{out}}$ is
relative to subset $A_v$. For example, if $d = 4$ and $A_v = \{1, 3, 5, 6\}$,
we would know that the first, third, fifth and sixth requests placed by $v$
were accepted. Moreover, if $A_v^{\mbox{out}} = \{2, 3\}$, we would know that
out of these, the second and third (i.e., the third and fifth request out of
the $\ell_v$ submitted by $v$) had destination in $V - S$.

\smallskip
Note that the boundary between the first field and the second  one above  is 
uniquely determined by the value of $\ell_v$, which is encoded in a 
prefix-free way. The same holds for the second field.

\paragraph{Table~2, Field~3: Destinations of accepted links originating from
$v$.} This field consists of two parts. In the first part, we  represent
accepted links with  destinations in $V-S$ as they are (i.e., using
$\log\Delta$ bits), with no gain or loss. In the second part, we represent
destinations of accepted links in $S$ using $\log((1 - \out_v)\Delta)$ bits
instead of $\log\Delta$.  Overall, the cost we incur is
\begin{equation}\label{reprandacc}
\bin(\dest(A_S)) = \sum_{v\in S}(1 - \freq_v)\nreq\log ((1 - \out_v)\Delta) 
+  \freq_v\nreq\log\Delta \, .
\end{equation}

\noindent \emph{Remark.} Note that here we are using a local numbering for
neighbors of $v$ that belong to $S$. Moreover, thanks to the information
encoded in the previous fields (i.e. the size of $A_v^{\mbox{out}}$ and that of
$A_v$)  we can use a  standard block  code for both the above parts since we
know exactly their respective lengths.

\paragraph{Table~2, Field~4: Destinations of rejected requests originating from
$v$.} We finally compress the encoding of the destinations of rejected requests. 
In order to do so, we first introduce the following notions.
 
\begin{definition}[Semi-saturated and Critical Nodes.]
\label{def:critical}
We call a node $w$ 
\begin{itemize}
\item {\em semi-saturated at round $t$}, if the number of
accepted incoming links up to round $t-1$, plus the number of requested
links at round $t$ originating from nodes in $V-S$ is at least $cd/2$. 
\item {\em critical at round $t$}, if it is \emph{not} semi-saturated at round
$t$ but it has more than $cd$ links (accepted or requested) at round
$t$ (note that this implies that $w$ received more than $cd/2$
requests from $S$ at round $t$).
\end{itemize}
\end{definition}

We will make use of the following facts.

\begin{lemma}\label{lemma:saturated}
For every round $t$, it holds that: 
\begin{itemize}
\item The number of semi-saturated
nodes is at most $2n/c$.
\item The number of critical
nodes is at most  $n/c$.
\end{itemize}
\end{lemma}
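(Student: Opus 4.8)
The plan is to prove each of the two bounds in Lemma~\ref{lemma:saturated} by a simple counting (double-counting) argument on the number of link requests and accepting slots available in the graph, exploiting the $cd/2$ threshold in Definition~\ref{def:critical}.

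\textbf{Semi-saturated nodes.} First I would fix a round $t$ and let $W_t$ be the set of nodes that are semi-saturated at round $t$. By definition, each $w\in W_t$ has (number of accepted incoming links up to round $t-1$) plus (number of requests at round $t$ originating from $V-S$) at least $cd/2$. I would like to charge each such node a ``budget'' of $cd/2$ against a global pool. The total number of accepted links over the whole execution is exactly $dn$ (each of the $n$ nodes makes exactly $d$ accepted outgoing links), and the accepted incoming links up to round $t-1$ are a subset of these, so they number at most $dn$; similarly the requests originating at round $t$ from nodes in $V-S$ are at most $d(n-s)\le dn$ (each node issues at most $d$ outstanding requests in any round). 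Hence $\sum_{w\in W_t}(cd/2)\le |W_t|\cdot cd/2$ is bounded by the total mass, which is at most $dn$: each accepted incoming link is counted once (for the unique node that accepted it) and each round-$t$ request from $V-S$ is counted once. Therefore $|W_t|\cdot \frac{cd}{2}\le dn$, i.e. $|W_t|\le 2n/c$.

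\textbf{Critical nodes.} Next, for the second bullet, fix round $t$ and let $C_t$ be the set of critical nodes at round $t$. By the parenthetical remark in Definition~\ref{def:critical}, each critical node receives more than $cd/2$ requests \emph{from $S$} at round $t$ (this is exactly the part I would want to re-derive cleanly: a critical node has more than $cd$ links total at round $t$, at most $cd/2$ of which come from accepted-up-to-$t-1$ plus round-$t$ requests from $V-S$ since it is not semi-saturated, so strictly more than $cd/2$ must be round-$t$ requests originating in $S$). The total number of requests originating from $S$ at any single round is at most $ds\le dn$, since each node issues at most $d$ outstanding requests per round. Double-counting: $|C_t|\cdot\frac{cd}{2} < \sum_{w\in C_t}(\text{requests from }S\text{ at round }t\text{ received by }w)\le ds \le dn$, giving $|C_t| < 2n/c$. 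To get the sharper bound $n/c$ I would use $ds\le dn/2$ (recall $s\le n/2$ by the standing assumption on the non-expanding set $S$), which yields $|C_t|\cdot\frac{cd}{2}\le \frac{dn}{2}$, i.e. $|C_t|\le n/c$.

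I do not expect a genuine obstacle here; the only point requiring slight care is the bookkeeping of \emph{which} quantity is bounded by $dn$ versus $dn/2$, and making sure that ``links at round $t$'' and ``requests at round $t$'' are counted consistently (outstanding requests issued in round $t$, not cumulative). The key facts being invoked are: every node issues at most $d$ outstanding requests in any round; the total number of accepted links in the whole run is exactly $dn$; and $|S|=s\le n/2$. With these, both bounds follow from a one-line charging argument.
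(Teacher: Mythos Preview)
Your charging argument is the right idea and matches the paper's approach, but the semi-saturated bookkeeping has a gap. You bound the accepted incoming links up to round $t-1$ by $dn$ and the round-$t$ requests from $V-S$ by $d(n-s)$, then assert the combined ``total mass'' is at most $dn$; but those two separate bounds only add up to $dn + d(n-s)$, which would give $|W_t|\le 4n/c$, not $2n/c$. The missing step (and this is exactly what the paper spells out) is to count from the \emph{source} side: for each node $u$, the number of $u$'s outgoing requests that have been accepted by the end of round $t-1$, plus the number of requests $u$ issues in round $t$, is exactly $d$ (since $u$ issues $d$ minus its current out-degree). Summing over all $u$ gives a single pool of size $dn$, and the quantities ``accepted incoming through $t-1$'' plus ``requests received in round $t$'' at the various receivers partition this pool; hence $|W_t|\cdot\frac{cd}{2}\le dn$.

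For critical nodes your route differs slightly from the paper's. The paper (which omits the details, saying it ``proceeds along the same lines'') would use the full $>cd$ threshold against the same $dn$ pool to get $|C_t|< n/c$ directly, without invoking $S$ at all. Your alternative---using that a critical node receives more than $cd/2$ round-$t$ requests from $S$, combined with $s\le n/2$---is valid and reaches the same bound; it just ties the lemma to the standing assumption $|S|\le n/2$, which the paper's version does not need.
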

\begin{proof}
Consider a node that is semi-saturated at round $t$. This node was the
recipient of at least $cd/2$ link requests, that it either accepted {\em
before} round $t$, or it received in round $t$.  Since, from the definition of
\ALG, for every node, the overall number of its link requests that are accepted
within round $t-1$, plus the number of link requests it issues at round $t$
cannot exceed $d$, we have a total of at most $dn$ such requests over the
entire network. This immediately implies that the number of semi-saturated
nodes at round $t$ cannot exceed $2n/c$.  The argument for the number of
critical nodes at round $t$ proceeds along the same lines and is omitted for
the sake of brevity.
\end{proof}

\smallskip\noindent
In what follows we represent the subsets of semi-saturated and critical nodes.

\begin{itemize}

\item \textbf{The subset of semi-saturated nodes at each step.} From its
definition, the set of semi-saturated nodes needs not be represented
explicitely. In fact, for every round $t$, this set is uniquely determined by
the evolution of the protocol (and thus by the corresponding portions of our
Tables) up to round $t-1$ and by link requests issued by nodes in $V - S$ at
round $t$, whose randomness is represented as it is (see Table 2, upper
portion).

\item \textbf{Table 3: The subset of critical nodes at each step.} We represent
the subsets of critical nodes in each round explicitely.  Let $C_t$ be the set
of critical nodes at round $t$ and let $c_t:= |C_t|$.  We represent all such
sets in a separate table (see Table~3 in Figure~\ref{fig:compr}). This table
consists of two fields. The first is the sequence of the critical set sizes,
encoded in a prefix-free way. The second field is the sequence of the integers
representing $C_t\subset V$, for $t=1, \ldots, T$. Note that, the length of the
field encoding $C_t$ is completely determined once we know $c_t$. Overall,
encoding information in Table~3 has cost 
\begin{equation}
\label{repcrit}
\bin{(C)} = \sum_{t=1}^T 2\log c_t + \log { n \choose c_t }  \, .
\end{equation}

\end{itemize}

Given this premise, this field consist of two parts. The first part is a
sequence of exactly $\ell_v - d$ bits. The $i$-th such bit specifies whether
the destination of the $i$-th rejected request was a semi-saturated or critical
node in the round in which the request was issued.  The second part of the
field is simply the sequence of destinations of rejected requests, encoded in
compressed form thanks to Lemma~\ref{lemma:saturated}.  Specifically, for each
round $t$, we represent each rejected connection toward to a critical node
using $\log c_t$ bits (recall that we explicitely represent $C_t$), and each
other rejected connection, which necessarily goes to a semi-saturated node,
using $\log (2n/c)$ bits. 

To compute the corresponding cost of representing destinations of 
rejected requests, let $\rc_t(v)$ be the number of
rejected requests from $v$ to critical nodes in round $t$, and
let $\rss(v)$ be the overall number of rejected connection requests from $v$ to
semi-saturated nodes, over the entire process. Then, the overall 
cost of encoding the destinations of rejected requests from $v$ is
\begin{equation}
 \label{reprandrej}  
 \bin(\dest(\sRJ)) = (\ell_v - d ) +  
 \rss(v) \cdot \log \frac{2n}{c} +   \sum_{t=1}^T \rc_t(v) \cdot  
 \log c_t \ \, .
\end{equation}

Observe that the additive term $(\ell_v - d )$ in the equation above 
corresponds to the aforementioned first part of the field. 

\subsection{Decoding algorithm}
We  show correctness of our
encoding, discussing how the entire evolution of the protocol can be recovered
from its compressed encoding without loss of information.
Before describing this decoding algorithm, it is useful to define, for the
remainder of this section, the notion of \emph{state of \ALG's 
execution} at round $t$. 

\begin{definition}\label{def:state}
The state $\state_t$ of \ALG's execution at time $t$ is a vector, whose 
component $\state_t(v)$ is the ordered sequence of the destinations of all 
link requests issued by $v$ in round $t$.
\end{definition}
We note that knowledge of $\{\state_1,\ldots ,\state_t\}$ allows to fully characterize the 
evolution of the process up to round $t$. In particular, for every round $i = 1,\ldots , t$, 
we can tell exactly which requests were accepted and which were 
rejected in that round.

\paragraph{Further notation used in this subsection.} For a node $v$ 
and a round $t$, we define by $x_t(v)$ and $a_t(v)$ respectively the 
overall number of link requests submitted by $v$ in round $t$ and the 
number of those that were accepted. We let $x_{\le t}(v) = 
\sum_{i=1}^tx_i(v)$ and $a_{\le t}(v) = 
\sum_{i=1}^ta_i(v)$ for conciseness (note that $a_t(v)\le x_t(v)$ and 
$a_{\le t}(v)\le x_{\le t}(v)$ by definition). 
For every $v\in V$, we denote by $\dest_t(v)$ the set of 
destinations of requests issued by $v$ in round $t$. We denote by $\SST_t$ 
and $C_t$ respectively the subsets of semi-saturated and critical nodes in round $t$.

\smallskip
We next outline the main steps of a decoding algorithm $\DEC(G, \sC)$. The
algorithm takes as input the underlying graph $G$ and the compressed encoding
$\sC$ and it returns the evolution of \ALG\ over the at most $T$ steps of its
execution. More precisely, for every $t$, $\DEC(G, \sC)$ returns a special
symbol $\emptyset$ if \ALG\ completed its execution before time $t$. Otherwise,
$\DEC(G, \sC)$ returns $\state_t$, i.e., for every $v$, the sequence of 
requests issued by $v$ in round $t$. Note that this is enough to recover $\sM$,
since unused randomness is represented as is both in $\sM$ and $\sC$. 
In particular we show how, given $G$, $\sC$ and $\{\state_1, \ldots 
,\state_{t-1}\}$, it is possible to recover $\state_t$.\footnote{Note that 
$\state_0$ simply contains an empty request sequence for every $v$.}
The main steps of the algorithm are summarized as Algorithm \ref{alg:dec} below, while details on how each piece of information can be recovered from $\sC$ have been discussed in Section \ref{ssec:description}. This is enough to prove that the compressed encoding is lossless. 

\begin{algorithm}[H]
\caption{\DEC($G$, $\sC$)}\label{alg:dec}
	\begin{algorithmic}[1]
		\State Identify $S$ from Table 1
		\For{$t = 1,\ldots , T$}
		    \State Use $\{\state_1,\ldots ,\state_{t-1}\}$ to compute $x_{\le t-1}(v)$ and $a_{\le t-1}(v)$, for every $v\in V$
		    \If {$a_{\le t-1}(v) = d$ for every $v\in V$}
		        return $\emptyset$
		    \EndIf
		    \For{$v\in V - S$}
		        \State Look up $v$'s row in Table 2, using $x_{\le t-1}(v)$ and $a_{\le t-1}(v)$ to identify the set $\dest_t(v)$ of the destinations of the $d - a_{\le t-1}(v)$ requests that were submitted by $v$ in round $t$
		        \State Use $\{\state_1,\ldots ,\state_{t-1}\}$ and $\dest_t(V-S)$ (the latter computed in the previous step) to identify the subset $\SST_t$ of semi-saturated nodes in round $t$
		        \State Use Table 3 to identify the subset $C_t$ of critical nodes in 
		        round $t$
		    \EndFor
		    \For {$v\in S$}
		        \State Use Field 1 of $v$'s row in $\sC$ to identify the subset of $v$'s accepted requests that were submitted in round $t$ and compute their number $a_t(v)$
		        \State Use information collected in the previous step, Field 2 and Field 	3 to identify the destinations of accepted requests submitted by $v$ in round $t$
		        \State Use Field 4 and $\SST_t$ and $C_t$ computed above to identify the destinations of rejected requests submitted in round $t$
		    \EndFor
		\EndFor
		\State return $\state_t$
	\end{algorithmic}
\end{algorithm}

\subsection{Rate of compression}\label{ssec:rateofcompression}
In this subsection, we show that, if $R$ represents an execution terminating
and returning a {\em non expanding} graph $H$, the corresponding encoding
according to the scheme presented in the previous section uses $ndT\log\Delta -
\Omega(\log n)$. In more detail, we apply our encoding scheme described in the
previous subsection to any subset $S\subset V$ that is not an
``$\freq$-expander'' in the graph $H$ returned by \ALG.

The analysis of the achieved compression rate proceeds by carefully bounding
the costs of the compressed representation $R'$ and comparing them with their
counterparts in the uncompressed representation $R$. We first show that the
additive costs (with respect to $R$) of representing the non-expanding set $S$
(see Table~1 of  Fig.~\ref{fig:compr} and~\eqref{reps}) and the subsets $A_S^{\mbox{out}} $ of
accepted requests with destinations in $V - S$ (see Field~2 in Table~2
and~\eqref{repcut}) are more than compensated by the compression achieved in
the representation of accepted requests with destinations in $S$ (see Field~3
in Table~2 and~\eqref{reprandacc}), with total savings
$\Omega(ds\log\frac{n}{s})$. This first step corresponds to bounding the
partial cost $ \bin(S) + \bin(A_S^{\mbox{out}}) + \bin(\dest(A_S))$
and it is provided in Lemma \ref{cl:sav_accepted}.

If this is intuitively the key argument, it is neglecting the fact that we now
need to identify the subset $A_S$ of requests originating from $S$ that are accepted
(see Field~1 of Table~2 and~\eqref{repaccs}). For each node, this cost depends
on the number of failures and in general cannot be compensated by the
aforementioned savings. To this purpose, we need to exploit the further
property that, for each node, failures have destinations that, for each round
$t$, correspond to an $\bigO(1/c)$ fraction of the vertices. Thanks to the use
of semisatured and critical nodes (see Field~4 of Table~2), compressing the
destinations of these requests allows to compensate the aforementioned cost
almost entirely. This second step corresponds to bounding the partial cost
$\bin(A_S) +  \bin(C) + \bin(\dest(\sRJ))$ and it is provided in Lemma \ref{lemma:ub_second}.

We state and prove a useful bound, that easily follows from the expansion
property of the underlying graph $G$ and Lemma~\ref{le:exp_mixing_lemma}.

\begin{lemma}\label{lemma:s_big}
Let $G = (V, E)$ be a $\Delta$-regular graph and let $\lambda$ be the second
largest eigenvalue of $G$'s adjacency matrix. Then, for any subset $S \subseteq
V$, it holds that
\[
1 - \out \, \le \, \frac{s}{n} + \frac{\lambda}{\Delta} \, .
\]
where $s = |S|$ and $\delta$ is defined as in~\eqref{eq:defeps&delta}.
\end{lemma}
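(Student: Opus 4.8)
The plan is to bound $e_G(S, S)$ from below in two different ways and combine them. On one hand, every node $v \in S$ has exactly $\Delta$ neighbors in $G$, of which $\out_v \Delta$ lie in $V-S$ and hence $(1-\out_v)\Delta$ lie in $S$. Summing over $v \in S$ and using that each edge inside $S$ is counted twice, we get
\[
e_G(S,S) \, = \, \frac 12 \sum_{v \in S} (1-\out_v)\Delta \, = \, \frac{\Delta}{2}\, s\,(1-\out),
\]
where the last equality uses the definition $\out = \frac 1s \sum_{v\in S}\out_v$ from~\eqref{eq:defeps&delta}.

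On the other hand, Lemma~\ref{le:exp_mixing_lemma} gives the upper bound
\[
e_G(S,S) \, \le \, \frac 12\left(\frac{\Delta s^2}{n} + \lambda s\right).
\]
Chaining the two displays yields $\frac{\Delta}{2}\, s\,(1-\out) \le \frac 12\left(\frac{\Delta s^2}{n} + \lambda s\right)$; dividing both sides by $\frac{\Delta s}{2}$ (which is positive, since $s = |S| \ge 1$ and $\Delta = \alpha n > 0$) gives exactly $1-\out \le \frac sn + \frac{\lambda}{\Delta}$, as claimed.

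This argument is essentially a one-line computation once the double-counting identity for $e_G(S,S)$ is written down, so there is no real obstacle; the only points requiring a moment's care are the factor of $\tfrac12$ from counting each internal edge twice and the fact that the lemma should be applied with the same $\lambda$ (the second largest adjacency eigenvalue) so that its hypotheses match. Note also that the statement is vacuous/trivial when $s = n$ (then $\out = 0$ and the right-hand side is $\ge 1$), so we may freely assume $1 \le s \le n$ throughout, and in the intended application $s \le n/2$.
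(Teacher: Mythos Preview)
Your proof is correct and follows essentially the same approach as the paper: express $e_G(S,S)$ in terms of $\out$ via the definition in~\eqref{eq:defeps&delta}, apply the one-sided Expander Mixing Lemma (Lemma~\ref{le:exp_mixing_lemma}), and divide through. Your handling of the factor $\tfrac12$ from double-counting is in fact cleaner than the paper's own write-up, which omits it and thereby ends up with the (still valid, slightly stronger) bound $1-\out \le \tfrac12\bigl(\tfrac{s}{n} + \tfrac{\lambda}{\Delta}\bigr)$.
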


\begin{proof}
From the definition of $\out_v$ we have that the number of edges with both
end-points in $S$ is 
\[
e(S, S) = \sum_{v\in S}(1 - \out_v)\Delta = (1 - \out)s\Delta \, .
\]
From Lemma~\ref{le:exp_mixing_lemma} it thus follows that
\[
(1 - \out) = \frac{e(S,S)}{s\Delta } \leq \frac{1}{2 s\Delta}
\left(\frac{\Delta s^2}{n}  + \lambda s \right) = \frac 12 \left(\frac{s}{n} +
\frac{\lambda}{\Delta} \right) \, . 
\]
\end{proof}

As a first, crucial  step of our compression analysis,  we evaluate the cost
$\bin(\dest(A_S))$ - see ~\eqref{reprandacc} - of representing the destinations
corresponding to the subset $A_S$ of accepted link requests from nodes in $S$.
We decided to isolate this step since it is the only one in which we make use
of the expansion property of the underlying graph $G$, stated in the   lemma
above.

\begin{lemma}[Bounding $\bin(\dest(A_S))$]\label{cl:sav_accepted}
Under the hypotheses of Theorem~\ref{thm:main-expanders}, the   cost $\bin(\dest(A_S))$ - see ~\eqref{reprandacc} - of
representing the destinations corresponding to the subset $A_S$ of accepted
link requests from nodes in $S$ satisfies the following bound:
\begin{equation}\label{eq:sav_accepted}
 \bin(\dest(A_S)) \leq \sum_{v\in S}(1 - \freq_v)\nreq\log ((1 - \out_v)\Delta) 
+  \freq_v\nreq\log\Delta\le sd\log\Delta 
- \frac{1 - \freq}{2}sd\log\frac{n}{s} + 2\freq ds \,.
\end{equation}
\end{lemma}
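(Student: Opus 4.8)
The plan is to start from the exact expression for $\bin(\dest(A_S))$ given in~\eqref{reprandacc}, namely $\sum_{v\in S}\left[(1 - \freq_v)\nreq\log((1 - \out_v)\Delta) + \freq_v\nreq\log\Delta\right]$, and to massage the first term so that the savings $\log\frac{1}{1-\out_v}$ per relevant link become visible. First I would rewrite $\log((1-\out_v)\Delta) = \log\Delta + \log(1-\out_v)$, so that the whole sum becomes $sd\log\Delta + \sum_{v\in S}(1-\freq_v)d\log(1-\out_v)$; the $\freq_v\nreq\log\Delta$ terms simply combine with part of the $(1-\freq_v)d\log\Delta$ terms to give the clean $sd\log\Delta$. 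Thus the entire content of the lemma reduces to showing
\[
\sum_{v\in S}(1-\freq_v)d\log(1-\out_v) \;\le\; -\frac{1-\freq}{2}\,sd\log\frac{n}{s} + 2\freq ds.
\]

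The key step is to control $\log(1-\out_v)$ using convexity together with the global bound from Lemma~\ref{lemma:s_big}, which tells us $\frac1s\sum_{v\in S}(1-\out_v) = 1-\out \le \frac{s}{n}+\frac{\lambda}{\Delta}$. Since $\log$ is concave, I cannot directly push the average inside; instead I would handle the weighting by $(1-\freq_v)$. The cleanest route: bound $(1-\freq_v)\le 1$ is wrong-signed since $\log(1-\out_v)<0$, so I actually want a \emph{lower} bound on $(1-\freq_v)$ multiplying a negative quantity — here is where the averaging has to be done carefully. I expect the intended argument is to write $\sum_{v\in S}(1-\freq_v)\log(1-\out_v) \le \sum_{v\in S}\log(1-\out_v) + \sum_{v\in S}\freq_v\log\frac{1}{1-\out_v}$, bound the second sum's $\log\frac1{1-\out_v}$ crudely (e.g.\ using that $\out_v$ is bounded away from $1$, or more simply $\log\frac{1}{1-\out_v}\le \frac{\out_v}{1-\out_v}$ and then $\sum \freq_v \out_v \le \sum\freq_v \le \freq s$ together with a lower bound on $1-\out_v$ coming from the expander-mixing regime) to produce the $2\freq ds$ slack term, and then apply concavity of $\log$ (Jensen) to $\frac1s\sum_{v\in S}\log(1-\out_v)\le\log(1-\out)\le\log\left(\frac{s}{n}+\frac{\lambda}{\Delta}\right)$. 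Finally, using $\lambda\le\epsilon\alpha^2\Delta$ and $\Delta=\alpha n$ gives $\frac{\lambda}{\Delta}\le\epsilon\alpha^2\le\epsilon\cdot\frac{s}{n}\cdot\frac{n}{s}\alpha^2$; for $s\le n/2$ and $\epsilon$ small this is absorbed so that $\frac{s}{n}+\frac{\lambda}{\Delta}\le\frac{C s}{n}$ for a constant $C$ close to $1$, whence $\log\left(\frac{s}{n}+\frac{\lambda}{\Delta}\right)\le -\log\frac ns + \log C \le -\frac12\log\frac ns$ once $\frac ns$ is large enough (and for $\frac ns$ bounded, i.e.\ $s=\Theta(n)$, the right-hand side's $2\freq ds$ term together with direct estimates suffices). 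Multiplying through by $sd$ and recombining yields the claimed bound.

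The main obstacle I anticipate is the interplay between the weights $1-\freq_v$ and the concavity step: Jensen applies to the unweighted average of $\log(1-\out_v)$, but the quantity we need is weighted by $1-\freq_v$, and a priori the nodes with large $\out_v$ (many edges leaving $S$ in $G$) could be exactly the ones with large $\freq_v$, which would weaken the bound. Resolving this requires either (i) splitting off the $\freq_v$-weighted part as a lower-order error term bounded by $O(\freq ds)$ as sketched above, which is why the harmless-looking additive $2\freq ds$ appears in the statement, or (ii) arguing that in the relevant regime $\out_v$ is uniformly bounded (e.g.\ $1-\out_v\ge \frac12\cdot\frac sn$ fails pointwise, so one really does need the averaging). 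I would also need to be slightly careful that $1-\out_v>0$ for all $v\in S$, i.e.\ that every node in $S$ has at least one neighbor in $S$ — but this is automatic because $v\in S$ itself is not counted and, more to the point, the terms with $1-\out_v$ small only help us (they contribute very negative logs), so degenerate cases only strengthen the inequality; the genuinely delicate estimation is the upper bound, handled by concavity plus Lemma~\ref{lemma:s_big}.
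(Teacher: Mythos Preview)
Your reduction to the inequality
\[
\sum_{v\in S}(1-\freq_v)d\log(1-\out_v) \;\le\; -\tfrac{1-\freq}{2}\,sd\log\tfrac{n}{s} + 2\freq ds
\]
is correct, and you have correctly identified that the interaction between the weights $(1-\freq_v)$ and the concavity step is the crux. However, neither of your two proposed resolutions goes through, and the paper handles both points differently.

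\textbf{First gap: the splitting does not produce an $O(\freq s)$ error.} Your decomposition is actually an equality,
\[
\sum_{v}(1-\freq_v)\log(1-\out_v)=\sum_v\log(1-\out_v)+\sum_v\freq_v\log\tfrac{1}{1-\out_v},
\]
and you propose to bound the second sum by $O(\freq s)$. But $\log\tfrac{1}{1-\out_v}$ is not bounded by any constant: a node $v\in S$ can have as few as one $G$-neighbour in $S$, so $1-\out_v$ can be as small as $1/\Delta$, making $\log\tfrac{1}{1-\out_v}=\log\Delta$. There is no pointwise lower bound on $1-\out_v$ ``coming from the expander-mixing regime''; the mixing lemma only controls the \emph{average} $1-\out$. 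Thus your error term can be of order $\freq s\log\Delta$, which swamps the main saving when $s$ is comparable to $n$. The paper avoids this entirely by applying Jensen \emph{with the weights built in}: writing $\frac{1-\freq_v}{(1-\freq)s}$ as a probability distribution on $S$, concavity of $\log$ gives directly
\[
\sum_{v}(1-\freq_v)\log\tfrac{1}{1-\out_v}\;\ge\;(1-\freq)s\,\log\frac{(1-\freq)s}{\sum_v(1-\freq_v)(1-\out_v)}\;\ge\;(1-\freq)s\,\log\frac{1-\freq}{1-\out},
\]
after which Lemma~\ref{lemma:s_big} applies cleanly. No splitting, and the factor $(1-\freq)$ appears naturally rather than as a leftover from an error estimate.

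\textbf{Second gap: the absorption of $\lambda/\Delta$ into $s/n$ fails for small $s$.} You claim $\tfrac{s}{n}+\tfrac{\lambda}{\Delta}\le \tfrac{Cs}{n}$. But $\lambda/\Delta\le\freq\alpha^2$ is a fixed constant independent of $s$, while $s/n$ can be as small as $1/n$; for $s=o(n)$ the sum is $\Theta(1)$, so $\log\bigl(\tfrac{s}{n}+\tfrac{\lambda}{\Delta}\bigr)=O(1)$ rather than $-\Omega(\log\tfrac{n}{s})$. (Your telescoping $\freq\alpha^2=\freq\cdot\tfrac{s}{n}\cdot\tfrac{n}{s}\alpha^2$ is a tautology and does not help.) The paper deals with this by a case split: for $s<\alpha\Delta$ it abandons Lemma~\ref{lemma:s_big} altogether and uses the elementary pointwise bound $(1-\out_v)\Delta=|N_G(v)\cap S|\le s$, i.e.\ $1-\out_v\le s/\Delta$, whence $\log\tfrac{1}{1-\out_v}\ge\log\tfrac{\Delta}{s}>\tfrac12\log\tfrac{n}{s}$ since $(\Delta/s)^2>n/s$ in this range. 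Only for $s\ge\alpha\Delta=\alpha^2 n$ does the paper invoke the mixing lemma, and there $\lambda/\Delta\le\freq\alpha^2\le\freq s/n$ really is absorbed.
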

\begin{proof}
We directly give a lower bound to the savings achieved with respect to the cost of  
the uncompressed representation, the latter being  $sd\log\Delta$.  Namely, we prove that
\begin{equation}\label{eq:sav_accepted-1}
sd\log\Delta - \left(\sum_{v\in S}(1 - \freq_v)\nreq\log ((1 - \out_v)\Delta) 
+ \sum_{v\in S}\freq_v\nreq\log\Delta\right)
\ge \frac{1 - \freq}{2}ds\log\frac{n}{s} - 2\freq ds \, ,
\end{equation}
whence \eqref{eq:sav_accepted} immediately follows. 
First of all,  the LHS of \eqref{eq:sav_accepted-1} can be written as 

\begin{align}
& \sum_{v \in S} d\log\Delta - \sum_{v\in S}\freq_v \nreq \log\Delta  -
\left(\sum_{v\in S}(1 - \freq_v)\nreq\log ((1 - \out_v)\Delta) \right) 
  \nonumber \\
& = \nreq\sum_{v\in S}(1 - \freq_v)\log\Delta - \nreq\sum_{v\in S}(1 - \freq_v)\log((1 - \out_v)\Delta) 
= \nreq\sum_{v\in S}(1 - \freq_v)\log\frac{1}{1 - \out_v} \, . \label{eq:gain1}
\end{align}

Next, we consider two cases.

\smallskip\noindent
\underline{Case $s < \alpha\Delta$.} By definition of  
 $\out_v$, we easily get    that 
\[(1-\out_v) \Delta =  e_G(v,S) \leq s \]
that immediately implies  
\begin{equation} \label{obs:s_small}
1 - \out_v\le\frac{s}{\Delta} \, .
\end{equation}
From \eqref{eq:gain1}, we get 
\begin{align*}
& \nreq\sum_{v\in S}(1 - \freq_v)\log\frac{1}{1 - \out_v}
\ge\nreq\sum_{v\in S}(1 - \freq_v)\log\frac{\Delta}{s} 
= (1 - \freq)sd\log\frac{\Delta}{s} 
> \frac{1 - \freq}{2}sd\log\frac{n}{s} \, ,
\end{align*}
where we used \eqref{obs:s_small} to write the first 
inequality, while the last inequality follows from the  definition  
$\freq = (1/s)\sum_{v \in S} \freq_v $ and since, in this case, 
\[
\frac{\Delta}{s} > \frac{1}{\alpha} = \frac{n}{\Delta} \, ,
\]
which in turn implies
\[
\frac{\Delta}{s} \cdot \frac{\Delta}{s} = \left(\frac{\Delta}{s}\right)^2 > 
\frac{n}{\Delta} \cdot \frac{\Delta}{s} = \frac{n}{s} \, .
\]

\smallskip\noindent
\underline{Case $\alpha\Delta\le s\le n/2$.} In this case, the proof is a
bit more articulated. To begin, we can write
\begin{align}
& \sum_{v\in S}(1 - \freq_v)\log\frac{1}{1 - \out_v} 
= -\sum_{v\in S}(1 - \freq_v)\log(1 - \out_v) 
= -(1 - \freq)s\sum_{v\in S}\frac{1 - \freq_v}{(1 - \freq)s}\log(1 - \out_v) \nonumber\\
& \ge -(1 - \freq)s\log\frac{\sum_{v\in S}(1 - \freq_v)(1 - \out_v)}{(1 
	- \freq)s}
\ge-(1 - \freq)s\log\frac{\sum_{v\in S}(1 - \out_v)}{(1 - \freq)s} 
= (1 - \freq)s\log\frac{1 - \freq}{1 - \out} \, . \label{eq:jensappli}
\end{align}
Here, to derive the third inequality we used Jensen's inequality on the function 
\[ \sum_{v\in
S}\frac{1 - \freq_v}{(1 - \freq)s}\log(1 - \out_v) \, , \] which is a convex
combination of values of a concave function. 

Next, going back to \eqref{eq:gain1}, from \eqref{eq:jensappli} and Lemma~\ref{lemma:s_big} we easily get  
\begin{equation} \label{eq:II-case-a}
	\nreq\sum_{v\in S}(1 - \freq_v)\log\frac{1}{1 - \out_v}  \ge
	 (1 - \freq)ds\log\frac{1 - \freq}{1 - \out} \ge
	(1 - \freq)ds\log\frac{(1 -  \freq)n\Delta}{\lambda n + \Delta s}
\end{equation}

Since we are assuming  $s\ge\alpha\Delta$,  for any $\lambda\le\freq\alpha^2\Delta$, it holds that $\lambda\le\freq\alpha^2\Delta$, and, hence,
\[
\frac{1}{s}\le (1 + \freq)\frac{\Delta}{s\Delta + \lambda n} \, .
\]
This, in turn, implies that 
\[
\frac{(1 - \freq)n\Delta}{\lambda n + \Delta s}\ge 
\frac{n}{s}\cdot\frac{1 - \freq}{1 + \freq}\, .
\]
From the above inequality and \eqref{eq:II-case-a}, we easily get

\begin{align*}
& \nreq\sum_{v\in S}(1 - \freq_v)\log\frac{1}{1 - \out_v}   \ge   (1 - \freq) ds \log \left( \frac{n}{s}\cdot\frac{1 - \freq}{1 + \freq}\right)  \\
& \ge          (1 - \freq)ds\log \frac{n}{s} - (1 - \freq)ds \log  \left(1 + \frac{2 \freq}{1 + \freq} \right) 
   \ge    (1 - \freq)ds\log \frac{n}{s} - (1 - \freq) ds \left(\frac{2 \freq}{1 + \freq} \right)   \\
& \ge (1 - \freq)ds\log \frac{n}{s} -  ds  2 \freq \, .
\end{align*}
\end{proof}

The above lemma is then used below to bound the first partial cost of our
compressed representation of $R$.

\begin{lemma}[Bounding $ \bin(S) + \bin(A_S^{\mbox{out}}) + \bin(\dest(A_S))$]
\label{lemma:ub_first}
Under the hypotheses of Theorem~\ref{thm:main-expanders}, the overall cost of
representing sets $S$, $A_S^{\mbox{out}}$ and $\dest(A_S)$ - see \eqref{reps},
\eqref{repcut}, and \eqref{reprandacc} - satisfies the following bound

\begin{equation} \label{eq:bound-overall-I}
\bin(S)   + \bin(A_S^{\mbox{out}} )  + \bin(\dest(A_S)) \leqslant
3s\log\frac{n}{s} + ds\log\Delta - \frac{1 - 13\freq\log(1/\freq)}{2} ds\log\frac{n}{s} + 2\freq ds
\end{equation}
\end{lemma}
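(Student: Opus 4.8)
\textbf{Proof plan for Lemma~\ref{lemma:ub_first}.} The plan is to add the three cost terms and show that the ``overhead'' terms $\bin(S)$ and $\bin(A_S^{\mbox{out}})$ are dominated by the savings already extracted in Lemma~\ref{cl:sav_accepted}. By~\eqref{reps} we have $\bin(S) = 2\log s + \log\binom ns$, and since $\log\binom ns \le s\log\frac{en}{s} = s\log\frac ns + s\log e$, this contributes at most $(1 + o(1))\,s\log\frac ns$ plus a lower-order additive term; being generous we bound it by $2s\log\frac ns + \bigO(1)$, which together with the $2\log s$ prefix is comfortably below $3s\log\frac ns$ in the target bound (I would keep a slack of $s\log\frac ns$ here to absorb the $A_S^{\mbox{out}}$ term). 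For $\bin(A_S^{\mbox{out}})$, from~\eqref{repcut} we have
\[
\bin(A_S^{\mbox{out}}) = \sum_{v\in S}\left(2\log(\freq_v d) + \log\binom{d}{\freq_v d}\right)
\le \sum_{v\in S}\left(2\log(\freq_v d) + \freq_v d\log\frac{e}{\freq_v}\right).
\]
The dominant piece is $\sum_{v\in S}\freq_v d\log\frac{e}{\freq_v} = d\sum_{v\in S}\freq_v\log\frac1{\freq_v} + \freq ds$. Here the hard part is that $\sum_{v\in S}\freq_v\log\frac1{\freq_v}$ is not simply $\freq s\log\frac1\freq$: the map $x\mapsto x\log\frac1x$ is concave, so by Jensen $\frac1s\sum_{v\in S}\freq_v\log\frac1{\freq_v}\le\freq\log\frac1\freq$ only if we normalize correctly — in fact the clean bound is $\sum_{v\in S}\freq_v\log\frac1{\freq_v}\le \freq s\log\frac s{\freq s} = \freq s\log\frac1\freq$ by the log-sum inequality (equivalently, concavity of entropy), giving $\bin(A_S^{\mbox{out}})\le \freq ds\log\frac1\freq + \bigO(\freq ds) + \bigO(s\log(\freq d))$. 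The term $2\log(\freq_v d)$ summed over $v$ is at most $2s\log d$, which is $\bigO(s)$ for constant $d$ and is absorbed.

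\textbf{Combining.} Adding the bound from Lemma~\ref{cl:sav_accepted}, namely $\bin(\dest(A_S))\le sd\log\Delta - \frac{1-\freq}2 sd\log\frac ns + 2\freq ds$, to the two overhead bounds above gives
\[
\bin(S) + \bin(A_S^{\mbox{out}}) + \bin(\dest(A_S)) \le 3s\log\frac ns + ds\log\Delta - \frac{1-\freq}2 sd\log\frac ns + \freq ds\log\frac1\freq + \bigO(\freq ds) + \bigO(s\log d).
\]
It remains to fold the positive error term $\freq ds\log\frac1\freq$ into the main negative term $-\frac{1-\freq}2 sd\log\frac ns$. Writing $-\frac{1-\freq}2 sd\log\frac ns + \freq ds\log\frac1\freq = -\frac12\bigl(1 - \freq - \frac{2\freq\log(1/\freq)}{\log(n/s)}\bigr) sd\log\frac ns$, and using that $\log\frac ns\ge 1$ (since $s\le n/2$), the coefficient is at least $1 - \freq - 2\freq\log\frac1\freq \ge 1 - 13\freq\log\frac1\freq$ for $\freq$ small (the slack factor $13$ rather than $3$ is exactly what absorbs the $\bigO(\freq ds)$ and $\bigO(s\log d)$ leftovers, using $d$ constant so $\bigO(s\log d) = \bigO(s) \le \bigO(\freq ds\log\frac1\freq)$ once the non-expansion hypothesis forces... — more carefully, one uses $\log\frac ns\ge 1$ together with $d\ge 44$ to ensure $ds\log\frac ns$ dominates the stray $\bigO(s)$ terms, or notes these are lower-order and can be charged to the same mechanism). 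This yields exactly~\eqref{eq:bound-overall-I}.

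\textbf{Main obstacle.} I expect the delicate point to be the bookkeeping of the $\freq_v$-dependent terms: one must resist bounding $\sum_v\freq_v\log\frac1{\freq_v}$ by the naive $\freq s\log\frac1\freq$ without justification, and instead invoke concavity/the log-sum inequality; and one must be careful that the $2\log(\cdot)$ prefix-free overheads, summed over $s$ nodes and $T$ rounds, stay lower-order — here it is crucial that $d$ and $c$ are constants and that $T = \bigO(\log n)$ does not appear in this particular lemma (it will appear in Lemma~\ref{lemma:ub_second}). A secondary subtlety is verifying that replacing the constant $3$ in front of $s\log\frac ns$ on the overhead side and the constant $13$ in the bracket suffices simultaneously; since the paper explicitly states it did not optimize constants, I would simply pick slacks generously and check the two inequalities $2 + (\text{slack for }A_S^{\mbox{out}}) \le 3$ and $1 - \freq - 2\freq\log\frac1\freq - (\text{slack for error terms}) \ge 1 - 13\freq\log\frac1\freq$ hold for all sufficiently small $\freq$.
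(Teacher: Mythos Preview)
Your overall strategy matches the paper's: bound $\bin(S)$ by $3s\log\frac{n}{s}$, bound the binomial part of $\bin(A_S^{\mbox{out}})$ via $\sum_{v}\freq_v\log\frac{1}{\freq_v}\le \freq s\log\frac{1}{\freq}$ (the paper phrases this as maximizing $\sum x_i\log\frac{1}{x_i}$ under $\sum x_i = B$, which is the same concavity argument you invoke), and then add Lemma~\ref{cl:sav_accepted}.

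There is, however, a real gap in how you handle the prefix-free overhead $\sum_{v\in S}2\log(\freq_v d)$. You bound it by $2s\log d$ and then try to absorb this $\bigO(s)$ term into the slack between the coefficients $3$ and $13$. That cannot work: the slack you have available in the coefficient of $ds\log\frac{n}{s}$ is of order $\freq\log(1/\freq)$, which tends to $0$ as $\freq\to 0$, whereas $2s\log d$ contributes $\frac{2\log d}{d\log(n/s)}$ to that coefficient --- a quantity bounded below by $\frac{2\log d}{d}\approx \tfrac14$ for $d=44$, independent of $\freq$. Your fallback (``$ds\log\frac ns$ dominates the stray $\bigO(s)$ terms'') would shift the coefficient by this same constant amount, again not absorbable into $13\freq\log(1/\freq)$. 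So the stated bound~\eqref{eq:bound-overall-I} is not reachable along your route.

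The paper's fix is a one-line sharpening you are missing: apply AM--GM to get
\[
2\sum_{v\in S}\log(\freq_v d)=2\log\prod_{v\in S}(\freq_v d)\le 2\log\Bigl(\tfrac{1}{s}\sum_{v\in S}\freq_v d\Bigr)^{s}=2s\log(\freq d).
\]
Now the term carries the factor $\freq$, and one checks $2s\log(\freq d)\le 2\freq\log(1/\freq)\,ds\log\frac{n}{s}$ (using $\log(\freq d)\le \freq d$ and $\log(1/\freq)\ge 1$, $\log\frac{n}{s}\ge 1$). With this, each of the three ``starred'' overhead terms $2s\log(\freq d)$, $\freq ds\log e$, $\freq ds\log\frac{1}{\freq}$ is at most $2\freq\log(1/\freq)\,ds\log\frac{n}{s}$, and folding all three plus the $\freq$ from $\frac{1-\freq}{2}$ into the main coefficient yields exactly the $13\freq\log(1/\freq)$.
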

\begin{proof}
As for $\bin(S)$, from~\eqref{reps}, notice that  
\begin{equation} \label{eq:boundonS}
2\log s + \log {n \choose s}\le 3s\log\frac{n}{s} \, ,
\end{equation}
since we are assuming $s \le n/2$ and it holds $\log {n \choose s} \leq s \log
(ne/s)$.

As for $\bin(A_S^{\mbox{out}})$, the first term in~\eqref{repcut} can be
bounded as follows 
\begin{equation} \label{eq:boundonAsv-I}
	2\sum_{v\in S}\log(\freq_v d) = 2\log\prod_{v\in S}(\freq_v 
	d)
	\le 2\log\left(\frac{\sum_{v\in S}\freq_v d}{s}\right)^s = 2s\log(\freq d)\, ,
\end{equation}
where the second inequality follows from the AM-GM
inequality~\cite{steele2004cauchy}, while the last equality follows from the
definition of $\freq$ (seei~\eqref{eq:fractions}). Moreover, the second term
in~\eqref{repcut} can be bounded as follows 
\begin{equation}\label{eq:boundonAsv-II}
	\sum_{v\in S}\log\binom{d}{\freq_v 
	d}\le d\sum_{v\in S}\freq_v\log\frac{e}{\freq_v} = \freq ds\log e + 
	d\sum_{v\in S}\freq_v\log\frac{1}{\freq_v}\le\freq ds\log e + \freq 
	d s\log\frac{1}{\freq} \, .
\end{equation}
Here, the second equality follows again from the 
definition of $\freq$, while the third inequality follows since the optimum 
of the following problem
\begin{align*}
	&\max g(x_1, ... , x_k) = \sum_{i=1}^k x_i\log\frac{1}{x_i}\\
	&\sum_{i=1}^k x_i = B 
	 \ \mbox{ with } x_i\ge 0\, , \ i = 1, \ldots , k
\end{align*}
is achieved when $x_1 = \ldots = x_k = B/k$.

Finally, combining~\eqref{eq:boundonS}, \eqref{eq:boundonAsv-I},
\eqref{eq:boundonAsv-II}, and~\eqref{eq:sav_accepted} given in Lemma
\ref{cl:sav_accepted}, we get
\begin{align}\label{eq:partial1}
	& \bin(S)   + \bin(A_S^{\mbox{out}} )  + \bin(\dest(A_S)) \nonumber\\ 
	& \le 3s\log\frac{n}{s} + \underbrace{2s\log(\freq d) + \freq ds\log e 
	 + \freq  d s\log\frac{1}{\freq}}_{*} 
	 + ds\log\Delta - \frac{1 - \freq}{2}ds\log\frac{n}{s} + 2\freq ds \nonumber \\
	& \le 3s\log\frac{n}{s} + ds\log\Delta - \frac{1 - 13\freq\log(1/\freq)}{2} ds\log\frac{n}{s} + 2\freq ds\, ,
\end{align}
where the last inequality holds, since each of the starred terms is at 
most $2 \freq \log(1/\freq) ds\log\frac{n}{s}$, whenever $s  \le n/2$,  and $\freq \leq 1/2$.

\end{proof}

In the next lemma, we bound the remaining part of our compressed representation of $R$. 

\begin{lemma}[Bounding  $\bin(A_S) +  \bin(C) + \bin(\dest(\sRJ))$]
\label{lemma:ub_second}
Under the hypotheses of Theorem~\ref{thm:main-expanders}, the overall cost of
representing sets  $A_S$,  $\bin(C)$, and $\dest(\sRJ))$ - see \eqref{repaccs},  \eqref{repcrit}, and
\eqref{reprandrej} - satisfies the following bound
\[
\bin(A_S) +  \bin(C) + \bin(\dest(\sRJ))
\leqslant
\log\Delta\sum_{v\in S}(\ell_v - d) + \frac{1}{4}ds \, .
\]
\end{lemma}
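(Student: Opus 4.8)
The plan is an accounting argument in which the bits saved by encoding the destinations of rejected requests pay for all the remaining overhead (namely $\bin(C)$, the $\log\binom{\ell_v}d$ terms in~\eqref{repaccs}, and the $\sum_v(\ell_v-d)$ flag bits in~\eqref{reprandrej}), leaving a residue of at most $\frac14 ds$. Throughout, write $r=\sum_{v\in S}(\ell_v-d)$ for the total number of rejected requests issued by nodes of $S$. Since a rejected request always lands on a node that, in the round it is issued, exceeds its $cd$‑threshold and is therefore either semi‑saturated or critical (Definition~\ref{def:critical}), we have $r=r_{ss}+r_{c}$ with $r_{ss}=\sum_{v\in S}\rss(v)$ and $r_{c}=\sum_{v\in S}\sum_{t}\rc_t(v)$.

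The first step is to rewrite $\bin(\dest(\sRJ))$ of~\eqref{reprandrej} so as to expose the savings over the na\"ive cost $\log\Delta$ per rejected destination. Using $\Delta=\alpha n$ and the hypothesis $c>2/\alpha$ we have $\log\frac{2n}c=\log\Delta-\log\frac{\alpha c}2$ with $\log\frac{\alpha c}2>0$, and by Lemma~\ref{lemma:saturated} $c_t\le n/c$, hence $\log c_t=\log\Delta-\log\frac{\Delta}{c_t}$ with $\log\frac{\Delta}{c_t}\ge\log(\alpha c)$. Setting $\rho_t:=\sum_{v\in S}\rc_t(v)$ and summing, using $r_{ss}+r_c=r$ and $\rho_t=0$ whenever $c_t=0$,
\[
\bin(\dest(\sRJ))\ =\ r\ +\ r\log\Delta\ -\ \Sigma,\qquad \Sigma:=r_{ss}\log\tfrac{\alpha c}2+\sum_{t\,:\,c_t\ge1}\rho_t\log\tfrac{\Delta}{c_t}\ \ge\ r\log\tfrac{\alpha c}2 .
\]

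The second, and most delicate, step is to absorb $\bin(C)$ (see~\eqref{repcrit}) into the critical part of $\Sigma$. Fix a round $t$ with $c_t\ge1$. Every critical node receives more than $cd/2$ requests from $S$ in round $t$, and all of them are rejected (a critical node exceeds its threshold and rejects everything received that round), so $\rho_t>\frac{cd}2\,c_t$. Combining this with $\Delta/c_t\ge\alpha c\ge4$ (from $c\ge(2/\alpha)^2$) and the estimate $\log\binom n{c_t}\le c_t\log\frac{ne}{c_t}$, a short calculation shows that for $c$ a sufficiently large constant $2\log c_t+\log\binom n{c_t}\le\frac14\rho_t\log\frac{\Delta}{c_t}$. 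Rounds with $c_t=0$ need no Table~3 entry at all (the decoder infers $c_t=0$ from the absence of any ``critical'' flag among the round‑$t$ rejections), so summing over all $t$ yields $\bin(C)\le\frac14\Sigma$.

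Finally I would bound $\bin(A_S)$ of~\eqref{repaccs}. By AM--GM, $\sum_{v\in S}2\log\ell_v\le 2s\log\bigl(d+\tfrac rs\bigr)\le 2s\log d+\tfrac{2r}{d\ln2}$; and writing $\ell_v=d+k_v$ one gets $\log\binom{\ell_v}d=\log\binom{d+k_v}{k_v}\le k_v\log\frac{e(d+k_v)}{k_v}\le k_v(\log d+3)$ for $k_v\ge1$ (and $=0$ for $k_v=0$), so $\sum_{v\in S}\log\binom{\ell_v}d\le(\log d+3)r$. Adding the three contributions,
\[
\bin(A_S)+\bin(C)+\bin(\dest(\sRJ))\ \le\ r\log\Delta\ +\ 2s\log d\ +\ \bigl(\log d+4+\tfrac{2}{d\ln2}\bigr)r\ -\ \tfrac34\Sigma ,
\]
and since $\Sigma\ge r\log\frac{\alpha c}2$, choosing $c$ large enough (the footnote's $c\ge10e^{10d}$ is far more than needed) makes $\tfrac34\Sigma\ge\bigl(\log d+4+\tfrac2{d\ln2}\bigr)r$, cancelling the $r$‑proportional error. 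The only surviving term is $2s\log d$, and $2\log d\le\tfrac14 d$ (equivalently $8\log_2 d\le d$) holds for all $d\ge44$, which gives exactly the claimed bound. I expect the amortization of $\bin(C)$ against $\Sigma$ in the second step, together with the verification that the accumulated $r$‑proportional overhead is genuinely dominated by the rejection‑destination savings (both of which dictate how large $c$ must be chosen), to be the main obstacle; the remaining manipulations are routine.
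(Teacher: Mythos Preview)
Your argument is correct and reaches the same conclusion, but the accounting is organised differently from the paper's.

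The paper does not isolate a savings term $\Sigma$. Instead, it bounds $\bin(A_S)$ coarsely by $\sum_{v\in S}\bigl(5d(\ell_v-d)+\tfrac d4\bigr)$ (the $\tfrac d4$ per node is exactly where $d\ge44$ enters, via $2\log d\le d/4$), and then deals with $\bin(C)$ by using the key estimate $\log\binom{n}{c_t}\le\sum_{v\in S}\rc_t(v)\log\frac{2n}{c\,c_t}$ (their Claim~\ref{cl:tech1}); this makes the $\log c_t$ coming from~\eqref{reprandrej} cancel pointwise with the one from~\eqref{repcrit}, leaving $\sum_{v}(\ell_v-d)\log\frac{2n}{c}$. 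They then split $\log\frac{2n}{c}=\log\frac{2n}{\sqrt c}-\tfrac12\log c$, use $c\ge(2/\alpha)^2$ to get $\frac{2n}{\sqrt c}\le\Delta$, and finally absorb the accumulated overhead $\gamma=1+5d+\tfrac{4}{cd}$ into $\tfrac12\log c$; this last step is where $c\ge 10e^{10d}$ comes from.

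Your route is cleaner in one respect: by bounding $\bin(A_S)$ via AM--GM you get an overhead of order $r\log d$ per rejection rather than $5dr$, so the condition on $c$ you actually need is only $\log(\alpha c)\gtrsim\log d$ rather than $\log c\gtrsim d$. The paper's route avoids introducing $\Sigma$ and instead telescopes $\log c_t$ directly, which is slightly slicker on the $\bin(C)$ side. One small quibble: your parenthetical that the decoder infers $c_t=0$ from the absence of critical flags in Field~4 reverses the decoding order (Table~3 is read before Field~4 in Algorithm~\ref{alg:dec}); but this costs at most $O(T)=O(\log n)$ extra bits to fix, and the paper's own formula~\eqref{repcrit} is equally informal about $c_t=0$ rounds.
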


\begin{proof}
As for $\bin(A_S)$, from $\eqref{repaccs}$, we next show that  
\begin{equation} \label{eq:costAs}
     \bin(A_S) = \sum_{v\in S}\left(2\log\ell_v + 
\log\binom{\ell_v}{d}\right)
\le \sum_{v \in S} \left( 5d(\ell_v - d) + \frac{d}{4} \right)\, .
\end{equation}
%
%
When $\ell_v=d$, the last inequality follows from the hypothesis $d\geq 44$, 
since $2\log\ell_v + 
\log\binom{\ell_v}{d} = 2\log d \leq \frac d4$. 
When $\ell_v>d$, we note that 
\[
 5d \geq \underbrace{ d(\log e +1)}_{(\star)}  + \underbrace{2d}_{(\star \star)} ,
\] and we observe that $(\star \star)$ accounts for the first term of $\bin(A_S)$
\[
2d(\ell_v - d) \geq 2\log \ell_v,
\]
and that $(\star)$ accounts for the second term of $\bin(A_S)$
\[
\log \binom{\ell_v}{d} 
\leq d\log \frac{e\ell_v}{d}
= d\log e + d(\log \ell_v - \log d) 
\leq d\log e + d(\ell_v - d) = d(\log e +1)(\ell_v - d).
\]

\smallskip
As for $ \bin(C)$, 
the definition of critical node at round $t$ implies
that each critical node at round $t$ is receiving more than $cd/2$ of its
incoming requests from nodes in $S$. As a consequence, we get
\[
\sum_{v\in S}\rc_t(v) > \frac{cd}{2}c_t \, 
\]
where we recall that $\rc_t(v)$ is the number of rejected requests   from node $v$ to critical nodes and $c_t$ is the size of the subset of nodes which turn to be critical at round $t$.
For the first term in \eqref{repcrit} this implies
\begin{equation}\label{eq:costC-I}
	2\sum_{t=1}^T\log c_t\le 
	2\sum_{t=1}^T\log\left(\frac{2}{cd}\sum_{v\in S}\rc_t(v)\right) < 
	\frac{4}{cd}\sum_{t=1}^T\sum_{v\in 
	S}\rc_t(v)\le\frac{4}{cd}\sum_{v\in S}(\ell_v - d) \, ,
\end{equation}
where to derive the last inequality we exchanged the order of summation 
and used the fact that $\sum_{t=1}^T\rc_t(v)\le \ell_v - d$.   
As for     the second term in \eqref{repcrit}, we get
\begin{equation}\label{eq:costC-II}
	\log\binom{n}{c_t}\le c_t\log\frac{en}{c_t}\le\frac{2}{cd}\sum_{v\in 
	S}\rc_t(v)\log\frac{en}{c_t}\le\sum_{v\in 
	S}\rc_t(v)\log\frac{2n}{c\cdot c_t}.
\end{equation}
Here, to derive the third inequality we used the following
\vspace{-24pt}
\begin{quote}
\begin{claim}\label{cl:tech1}
It holds
\[
\log\left(\frac{en}{c_t}\right)^{\frac{2}{cd}}\le\log\frac{2n}{c\cdot c_t}\, ,
\]
whenever $c$ is large enough that $d\ge\frac{2}{c}\log\frac{ce}{2}$.
\end{claim}
\begin{proof}
The proof follows from simple calculus, recalling that the definition 
of critical node implies that there are at most $n/c$ of them at any 
round $t$.
\end{proof}
\end{quote}

\noindent
By combining \eqref{eq:costAs}, \eqref{eq:costC-I},   \eqref{eq:costC-II} and \eqref{reprandrej},  
 we set $\gamma = 1 + 5d + \frac{4}{cd}$ and  
 derive  the following   upper bound
\begin{align}\label{eq:partial2}
	&\bin(A_S) +  \bin(C) + \bin(\dest(\sRJ)) \nonumber \\
	& \le  \sum_{v\in S}\left(\gamma(\ell_v - d) + 
	\frac{1}{4}d + \sum_{t=1}^T
	\rc_t(v)\log\frac{2n}{c\cdot c_t} + \rss(v) \cdot \log 
	\frac{2n}{c} + \sum_{t=1}^T \rc_t(v) \cdot \log c_t \right)
	\nonumber\\
	& {=} \gamma\sum_{v\in S}(\ell_v - d) + \frac{1}{4}ds + \sum_{v\in S}\left( 
	     \sum_{t=1}^T\rc_t(v)\log\frac{2n}{c} +
	     \rss(v) \cdot \log \frac{2n}{c} \right)\nonumber\\
	& {=} \gamma\sum_{v\in S}(\ell_v - d) + \frac{1}{4}ds 
	    +  \sum_{v\in S}(\ell_v - d)\log \frac{2n}{c} \nonumber\\
	& {=} \gamma\sum_{v\in S}(\ell_v - d) + \frac{1}{4}ds 
	    +  \sum_{v\in S}(\ell_v - d)\log \frac{2n}{\sqrt c} - \frac{1}{2}\sum_{v\in S}(\ell_v - d)\log c \nonumber\\
	& \stackrel{(a)}{\leq} \gamma\sum_{v\in S}(\ell_v - d) + \frac{1}{4}ds + 
        \sum_{v\in S}(\ell_v - d) \log\Delta - \frac{1}{2}\sum_{v\in 
	S}(\ell_v - d)\log c\nonumber\\
	&\stackrel{(b)}{\leq}\log\Delta\sum_{v\in S}(\ell_v - d) + \frac{1}{4}ds \, ,
\end{align}
where we used that we can assume $c$ large enough so that, 
in $(a)$, $c\ge (2/\alpha)^2$ and $\Delta = \alpha n$ imply $\Delta\ge 2n/\sqrt{c}$, while in $(b)$
we used $\gamma = 1+5d +\frac{4}{cd}\leq \frac 12\log c$.
\end{proof}

\paragraph{Wrap up: Proof of Theorem~\ref{thm:main-expanders}.}
From Lemmas~\ref{lemma:ub_first} and~\ref{lemma:ub_second} it follows that the
total number of bits to encode the executions of nodes in $S$ is as follows
(recall that we use $(n - s)dT\log\Delta$ bits for vertices in $V - S$).
\begin{align}\label{eq:total}
& \bin(S) + \bin(A_S^{\mbox{out}}) + \bin(\dest(A_S)) + 
\bin(A_S) +  \bin(C) + \bin(\dest(\sRJ))
 \nonumber\\ 
& \le 3s\log\frac{n}{s} + ds\log\Delta - \frac{1 - 13\freq\log(1/\freq)}{2} ds\log\frac{n}{s} + 2\freq ds
+ \log\Delta\sum_{v\in S}(\ell_v - d) + \frac{1}{4}ds \, 
\end{align}
To this cost, we should also add the randomness that was not used, 
which is exactly $(dT - \ell_v)\log\Delta$ for the generic node $v$. 
Our savings are then 
\begin{align*}
	&\mbox{Savings}\ge dsT\log\Delta \nonumber\\
	&\hspace{12pt}- \left(3s\log\frac{n}{s} + ds\log\Delta - \frac{1 - 13\freq\log(1/\freq)}{2} ds\log\frac{n}{s} + 2\freq ds
	 + \log\Delta\sum_{v\in S}(dT - d) + \frac{1}{4}ds\right)\\
	& = - 3s\log\frac{n}{s} + \frac{1 - 13\freq\log(1/\freq)}{2} ds\log\frac{n}{s} - 
	\left(\frac{1}{4} + 2\freq\right)ds \, ,
\end{align*}
Finally, the expression above is $\Omega(\log n)$, as soon as $\freq$ is a
sufficiently small (absolute) constant.

\section{Future work} \label{sec:concl}
A first interesting open problem is extending the analysis of \ALG\ to
non-dense expanders, i.e., to cases in which $\Delta = o(n)$. In this setting,
both the proofs of convergence and of expansion might need to be revisited in
significant ways. For example, if $\Delta < n/c$, we can no longer guarantee
that all nodes will eventually establish $d$ connections: it might well be the
case that all neighbours of some node become saturated at some point, before
the node itself can see all its requests accommodated. In fact, $\Delta =
\Omega(\log n)$ is necessary to ensure that this does not occur with high
probability. Another interesting generalization is extending the analysis to
the case of non-regular graphs, possibly relying on the corresponding
generalization of the Expander Mixing Lemma. Finally, it would be interesting
to investigate the robustness of \ALG\ in dynamic settings, in which nodes
and/or vertices of the underlying graph $G$ may join or leave the network.

\bibliographystyle{plain}
\bibliography{drrs}

\begin{thebibliography}{10}

\bibitem{ABLMO16}
Zeyuan Allen-Zhu, Aditya Bhaskara, Silvio Lattanzi, Vahab Mirrokni, and Lorenzo
  Orecchia.
\newblock Expanders via local edge flips.
\newblock In {\em Proceedings of the 27th Annual ACM-SIAM Symposium on Discrete
  algorithms (SODA 2016)}, pages 259--269. Society for Industrial and Applied
  Mathematics, 2016.

\bibitem{BSS12}
Joshua~D. Batson, Daniel~A. Spielman, and Nikhil Srivastava.
\newblock Twice-{R}amanujan sparsifiers.
\newblock {\em {SIAM} J. Comput.}, 41(6):1704--1721, 2012.

\bibitem{BFLS}
Petra Berenbrink, Tom Friedetzy, Christiane Lammersen, and Thomas Sauwervald.
\newblock Parallel randomized load balancing.
\newblock {\em Unpublished Manuscript}, 2018.

\bibitem{BKSS13}
Petra Berenbrink, Kamyar Khodamoradi, Thomas Sauerwald, and Alexandre Stauffer.
\newblock Balls-into-bins with nearly optimal load distribution.
\newblock In {\em Proceedings of the 25th Annual ACM Symposium on Parallelism
  in Algorithms and Architectures (SPAA 2013)}, pages 326--335, New York, NY,
  USA, 2013. ACM.

\bibitem{CHKM12}
Keren Censor{-}Hillel, Bernhard Haeupler, Jonathan~A. Kelner, and Petar
  Maymounkov.
\newblock Global computation in a poorly connected world: fast rumor spreading
  with no dependence on conductance.
\newblock In {\em Proceedings of the 44th Symposium on Theory of Computing
  Conference (STOC 2012) New York, NY, USA, May 19 - 22, 2012}, pages 961--970,
  2012.

\bibitem{DP09}
Devdatt~P. Dubhashi and Alessandro Panconesi.
\newblock {\em Concentration of measure for the analysis of randomized
  algorithms}.
\newblock Cambridge University Press, 2009.

\bibitem{FM99}
Alan~M. Frieze and Michael Molloy.
\newblock Splitting an expander graph.
\newblock {\em J. Algorithms}, 33(1):166--172, 1999.

\bibitem{GT00}
Rosario Gennaro and Luca Trevisan.
\newblock Lower bounds on the efficiency of generic cryptographic
  constructions.
\newblock In {\em Proceedings of the 41st Annual Symposium on Foundations of
  Computer Science (FOCS 2000)}, pages 305--313. IEEE, 2000.

\bibitem{GMS06}
Christos Gkantsidis, Milena Mihail, and Amin Saberi.
\newblock Random walks in peer-to-peer networks: Algorithms and evaluation.
\newblock {\em Perform. Eval.}, 63(3):241--263, March 2006.

\bibitem{H13}
Bernhard Haeupler.
\newblock Simple, fast and deterministic gossip and rumor spreading.
\newblock In {\em Proceedings of the 24th Annual {ACM-SIAM} Symposium on
  Discrete Algorithms (SODA 2013), New Orleans, Louisiana, USA, January 6-8,
  2013}, pages 705--716, 2013.

\bibitem{K16}
Ioannis Koutis and Shen~Chen Xu.
\newblock Simple parallel and distributed algorithms for spectral graph
  sparsification.
\newblock {\em ACM Transactions on Parallel Computing (TOPC)}, 3(2):14, 2016.

\bibitem{LS03}
Ching Law and K-Y Siu.
\newblock Distributed construction of random expander networks.
\newblock In {\em Proceedings of the 22nd Annual Joint Conference of the IEEE
  Computer and Communications (INFOCOM 2003)}, volume~3, pages 2133--2143.
  IEEE, 2003.

\bibitem{LW11}
Christoph Lenzen and Roger Wattenhofer.
\newblock Tight bounds for parallel randomized load balancing: Extended
  abstract.
\newblock In {\em Proceedings of the 43rd Annual ACM Symposium on Theory of
  Computing (STOC 2011)}, pages 11--20, New York, NY, USA, 2011. ACM.

\bibitem{LP17}
David~A. Levin and Yuval Peres.
\newblock {\em Markov Chains and Mixing Times: Second Edition}.
\newblock American Mathematical Society, 2017.

\bibitem{VL97}
Ming Li and Paul Vit{\'a}nyi.
\newblock {\em An introduction to Kolmogorov complexity and its applications}.
\newblock Springer Science \& Business Media, 2013.

\bibitem{MSS15.ks}
Adam~W. Marcus, Daniel~A. Spielman, and Nikhil Srivastava.
\newblock Interlacing families {II}: {M}ixed characteristic polynomials and the
  {K}adison--{S}inger problem.
\newblock {\em Annals of Mathematics}, pages 327--350, 2015.

\bibitem{Mc98}
Colin McDiarmid.
\newblock Concentration.
\newblock In {\em Probabilistic methods for algorithmic discrete mathematics},
  pages 195--248. Springer, 1998.

\bibitem{ACMR98}
Adler Micah, Chakrabarti Soumen, and Rasmussen~Lars E.
\newblock Parallel randomized load balancing.
\newblock {\em Random Struct. Algorithms}, 13(2):159--188, 1998.

\bibitem{MMR17}
Pat Morin, Wolfgang Mulzer, and Tommy Reddad.
\newblock Encoding arguments.
\newblock {\em ACM Comput. Surv.}, 50(3):46:1--46:36, July 2017.

\bibitem{MSW06}
Thomas Moscibroda, Stefan Schmid, and Rogert Wattenhofer.
\newblock On the topologies formed by selfish peers.
\newblock In {\em Proceedings of the 25th Annual ACM Symposium on Principles of
  Distributed Computing (PODC 2006)}, pages 133--142, New York, NY, USA, 2006.
  ACM.

\bibitem{N08}
Satoshi Nakamoto.
\newblock Bitcoin: A peer-to-peer electronic cash system.
\newblock White Paper, 2008.

\bibitem{NW07}
Moni Naor and Udi Wieder.
\newblock Novel architectures for p2p applications: The continuous-discrete
  approach.
\newblock {\em ACM Trans. Algorithms}, 3(3), August 2007.

\bibitem{steele2004cauchy}
J.~Michael Steele.
\newblock {\em The Cauchy-Schwarz master class: an introduction to the art of
  mathematical inequalities}.
\newblock Cambridge University Press, 2004.

\end{thebibliography}

\newpage
\appendix
\begin{center}
\LARGE{\textbf{Appendix}}
\end{center}
\section{Mathematical Tools} \label{app:tools}
In Section~\ref{subsec:expconvtime}, we used the method of bounded differences.
In particular, we applied - a slight generalization of\footnote{Via a simple coupling argument,
in the bound in \cite{DP09,Mc98} we can substitute to the expected value an upper bound to it.} - the concentration bound in \cite{DP09,Mc98}. 

\begin{theorem} \label{thm:mobd}
Let $\mathbf{Y} = (Y_1, \ldots, Y_m)$ be independent r.v.s, with  $Y_j$ taking
values in a set $A_j$. Suppose the real-valued funcion $f$ defined on $\prod
A_j$ satisfies 
\[
|f(\by) - f(\by')|  \leq \beta_j
\]
whenever vectors $\by$ and $\by'$ differs only in the $j$-th coordinate. Let
$\mu$ be an upper bound to the expected value of r.v. $f(\mathbf{Y})$.  Then, for any $M>0$, it
holds that
\[
\Pr\left( f( \mathbf{Y}) - \mu \, \geq \,  M  \right) \, \leq \, e^{-
\frac{2M^2}{\sum_{j=1}^m \beta_j^2}} \, .
\]
\end{theorem}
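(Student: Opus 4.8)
The plan is to derive this statement from McDiarmid's classical bounded-differences inequality; the only feature beyond the textbook version is that one is allowed to use an arbitrary upper bound $\mu \geq \Expcc{f(\bY)}$ in place of the exact mean, and I would dispose of this first. Writing $\mu_0 := \Expcc{f(\bY)}$, the hypothesis $\mu \geq \mu_0$ gives the event inclusion $\{f(\bY) - \mu \geq M\} \subseteq \{f(\bY) - \mu_0 \geq M\}$, hence $\Pr(f(\bY)-\mu\geq M) \leq \Pr(f(\bY)-\mu_0\geq M)$ by monotonicity of probability. So it suffices to prove the bound with $\mu$ replaced by $\mu_0$, which is the classical statement.

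For that, I would run the Doob-martingale / Azuma--Hoeffding argument. Set $\mathcal{F}_k := \sigma(Y_1,\dots,Y_k)$, $Z_0 := \mu_0$, and $Z_k := \Expcc{f(\bY)\mid \mathcal{F}_k}$, so that $(Z_k)_{k=0}^m$ is a martingale with $Z_m = f(\bY)$. Since the $Y_j$ are independent, each $Z_k$ is a deterministic function $\phi_k(Y_1,\dots,Y_k)$ obtained by averaging $f$ over $Y_{k+1},\dots,Y_m$ against their marginal law, and $Z_{k-1}$ is the average of $\phi_k(Y_1,\dots,Y_{k-1},\cdot)$ against the marginal law of $Y_k$; hence, conditionally on $\mathcal{F}_{k-1}$, the increment $D_k := Z_k - Z_{k-1}$ is a mean-zero function of $Y_k$ alone.

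The key step is to bound the conditional range of $D_k$ by $\beta_k$: for any two values $y,y'$ of the $k$-th coordinate, the corresponding argument vectors of $f$ differ only in position $k$, so the bounded-differences hypothesis, applied pointwise inside the average over $Y_{k+1},\dots,Y_m$, gives $|\phi_k(Y_1,\dots,Y_{k-1},y) - \phi_k(Y_1,\dots,Y_{k-1},y')| \leq \beta_k$, and therefore, given $\mathcal{F}_{k-1}$, $D_k$ lies in an interval of length at most $\beta_k$. Hoeffding's lemma applied to this zero-mean, range-$\beta_k$ conditional variable yields $\Expcc{e^{tD_k}\mid\mathcal{F}_{k-1}} \leq e^{t^2\beta_k^2/8}$ for all $t>0$; chaining these via the tower property over $k=1,\dots,m$ gives $\Expcc{e^{t(Z_m-Z_0)}} \leq \exp\bigl(t^2\sum_{k=1}^m \beta_k^2/8\bigr)$. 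Markov's inequality then gives $\Pr(f(\bY)-\mu_0\geq M) \leq \exp\bigl(-tM + t^2\sum_k\beta_k^2/8\bigr)$, and the choice $t = 4M/\sum_k\beta_k^2$ yields the asserted bound $\exp\bigl(-2M^2/\sum_j\beta_j^2\bigr)$. Together with the first paragraph, this completes the proof.

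I expect the main obstacle to be exactly the range bound in the third paragraph: one has to use independence of the $Y_j$ (so that conditioning on $\mathcal{F}_{k-1}$ genuinely integrates out $Y_k$ against its marginal, making $D_k$ a mean-zero function of $Y_k$ only) together with the pointwise form of the bounded-differences hypothesis. The rest --- the martingale set-up, Hoeffding's lemma, the exponential-moment chaining, and the optimization over $t$ --- is routine.
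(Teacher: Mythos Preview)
Your proof is correct and follows the standard Doob-martingale/Azuma--Hoeffding route to McDiarmid's inequality. Note, however, that the paper does not actually prove this theorem: it is stated in the appendix as a known tool, with citations to \cite{DP09,Mc98}, and the only remark beyond that is a footnote saying the passage from the exact mean to an upper bound $\mu$ follows ``via a simple coupling argument.'' Your first paragraph handles that passage by the even more elementary observation that $\{f(\bY)-\mu\geq M\}\subseteq\{f(\bY)-\mu_0\geq M\}$ whenever $\mu\geq\mu_0$, which is entirely sufficient and arguably cleaner than invoking a coupling. The remainder of your argument is the textbook derivation and is fine as written.
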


\section{Proving Lemma~\ref{lem:termination} via an encoding argument}
\label{app:time}
We provide here an elegant, alternative proof for the fact \ALG\ on graph  $G$
completes its task within a logarithmic number of rounds, w.h.p. The proof
relies on a simple encoding argument~\cite{MMR17} and it can be seen as a
``warm-up'' for  the much more complex analysis given  in
Section~\ref{sub:expexp} which makes use of the same approach.

\begin{lemma}\label{le:exp_completion}
Let $G = (V,E)$ be any $\Delta$-regular graph with $\Delta = \alpha n$ and $0 <
\alpha \leqslant 1$ and let $d\geq 1$ be any absolute constant. Then, for any
$c > 1/\alpha$, any $\beta > 2$, and any large-enough $n$, \ALG$(G,d,c)$ on
graph $G$ terminates within $(\beta / \log (\alpha c)) \log n$ rounds with
probability at least $1-n^{-(\beta - 2)/2}$.
\end{lemma}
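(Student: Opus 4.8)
\ideaproof The plan is to re-run, in the much simpler setting of non-termination, the compression strategy of Section~\ref{sub:expexp}: encode the random choices of \ALG\ as a bit string (base-$2$ logarithms throughout) and show that every string leading to non-termination within $T:=\lceil(\beta/\log(\alpha c))\log n\rceil$ rounds is compressible, hence rare. Concretely, view the randomness as a string $R$ that, for each of the $n$ nodes, lists $Td$ blocks of $\log\Delta$ bits (the $j$-th block of $v$ naming the $j$-th neighbour $v$ picks), so that the first $T$ rounds of the execution are a deterministic function of $R$ and blocks a node never uses are carried over verbatim. I would build an injective, $G$-dependent map sending every bad string $R$ (one for which some node still has out-degree $<d$ after round $T$) to a string of at most $nTd\log\Delta-\frac{\beta-2}{2}\log n$ bits; since an injective encoding into strings of length $\le M$ has at most $2^{M+1}$ preimages, this yields $\Pr(\text{not done by round }T)\le n^{-(\beta-2)/2}$ for $n$ large, absorbing the constant factor into the slack of the encoding.

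First I would pick, in a bad string, a node $v^\star$ whose out-degree is still below $d$ at the end of round $T$. Out-degrees only increase, so $v^\star$'s out-degree is $<d$ at the start of \emph{every} round $t\le T$, whence $v^\star$ issues at least one request in round $t$; and these cannot all be accepted, for then $v^\star$ would reach out-degree $d$ and stay there. Hence $v^\star$ has at least one rejected request in each of the $T$ rounds, and at least $T$ rejected requests overall. The structural heart is the same counting used to prove Lemma~\ref{lem:termination}: the destination $w$ of a request rejected in round $t$ satisfies $(\text{in-degree of }w\text{ at the start of round }t)+(\text{requests }w\text{ receives in round }t)>cd$; writing $\tilde r_t(w)$ for the number of round-$t$ requests reaching $w$ that are \emph{not} among $v^\star$'s rejected requests, this gives $(\text{in-deg of }w)+\tilde r_t(w)\ge cd$, and since the total number of round-$t$ requests is $nd$ minus the number of links already present, the quantities $(\text{in-deg of }w)+\tilde r_t(w)$ sum to at most $nd$ over all $w$; so fewer than $n/c$ nodes reach the threshold $cd$. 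Thus such a destination lies in a set $B_t\subseteq N(v^\star)$ of size $<n/c$ and can be named with $\log(n/c)$ bits in place of $\log\Delta=\log(\alpha n)$, a gain of $\log(\alpha c)>0$ bits per rejected request (this is where $c>1/\alpha$ is used), hence at least $T\log(\alpha c)\ge\beta\log n$ bits over the $T$ rounds.

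The encoding would keep every block of every node $\ne v^\star$, and every block of an accepted or unused request of $v^\star$, verbatim, and replace each block encoding a rejected request of $v^\star$ issued in round $t$ by the $\log(n/c)$-bit index of its destination inside $B_t$. To make this decodable I would additionally store the name of $v^\star$ ($\log n$ bits) and a prefix-free description, of total length $\bigO(d\log\log n)$ since $v^\star$ accepts fewer than $d$ requests, identifying round by round which of $v^\star$'s requests are the accepted ones. Decoding then proceeds round by round: after rounds $1,\dots,t-1$ have been recovered (so all current degrees and all the $k_s$ are known), the decoder reads the verbatim blocks of all nodes $\ne v^\star$ for round $t$ and then, in the stored order with accepted ones first, the verbatim blocks of $v^\star$'s accepted requests for round $t$; at that point it knows every round-$t$ request except $v^\star$'s rejected ones, hence can compute $\tilde r_t(w)$ and $B_t$ for every $w\in N(v^\star)$; it finally reads the short blocks of $v^\star$'s rejected requests as indices into $B_t$ and uses the stored accepted/rejected pattern to restore their original order within the round. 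Summing up, the compressed length is at most $nTd\log\Delta-T\log(\alpha c)+\log n+\bigO(d\log\log n)\le nTd\log\Delta-(\beta-1)\log n+o(\log n)$, which is below $nTd\log\Delta-\frac{\beta-2}{2}\log n$ once $n$ is large, as $\beta-1>(\beta-2)/2$.

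The step I expect to be the main obstacle is precisely this decodability bookkeeping: the decoder can reconstruct $B_t$ only after it has read every round-$t$ request other than $v^\star$'s rejected ones, so one must (i) arrange $v^\star$'s blocks within each round so that the verbatim (accepted) ones precede the short (rejected) ones, and (ii) encode which requests of $v^\star$ are accepted, and their original within-round order, using only $o(\log n)$ bits, so that the $\beta\log n$ gross gain is not swallowed by overhead. Beyond that, only the routine inequalities remain — $\log(\alpha c)>0$ from $c>1/\alpha$, the telescoping bound $\sum_w\big((\text{in-deg of }w)+\tilde r_t(w)\big)\le nd$, and $\bigO(d\log\log n)=o(\log n)$ for constant $d$ — and this argument is deliberately lighter than the one in Section~\ref{sub:expexp}, where one must additionally compress accepted links inside a non-expanding set and control the total number of rejections.
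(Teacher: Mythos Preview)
Your proposal is correct and follows essentially the same encoding argument as the paper: name a node $v^\star$ that suffers a rejection in every round, compress each of its rejected destinations from $\log\Delta$ bits down to $\log(n/c)$ bits (since rejecting nodes number at most $n/c$ per round), and verify that the bookkeeping overhead---the identity of $v^\star$ plus the positions of its $<d$ accepted requests among $\ell_{v^\star}\le Td$ total---costs only $\log n + \bigO(d\log\log n)$, leaving net savings of order $\beta\log n$. Your explicit round-by-round reconstruction of the ``overloaded'' set $B_t$ via the decoder-computable quantity $\tilde r_t(w)$ is precisely what the paper's terse clause ``the previous bits of our encoding uniquely identify the set of overloaded nodes at each round'' is gesturing at, and your accounting is if anything a bit more careful than the paper's.
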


\begin{proof}
We prove the lemma via an \textit{encoding} argument~\cite{MMR17}\footnote{As remarked in Section \ref{ssec:description}, we can avoid taking ceilings of the quantities measuring the number of bits for the encoding.}. Notice that
any execution of \ALG\ for $t$ rounds is completely determined by a sequence of
$t n d \log \Delta$ random bits: Indeed, $\log \Delta$ random bits can be used
for each link request of any fixed node, each node makes at most $d$ link
requests at each round, and this procedure is repeated for every node and for
$t$ rounds.

Consider an execution where there is a node $v$ with $d' < d$ outgoing edges at
round $t$ and note that this node must have had at least one rejected request
in each  of the $t$ rounds. We can encode the sequence of $t n d \log \Delta$
bits that generates such an execution as follows: The first $\log n$ bits
encode node $v$; The following $t (n-1) d \log \Delta$ bits encode all possible
random choices of all nodes but $v$; As for the random bits of node $v$, let
$\ell_v$ be the number of random choices actually made by $v$ during the $t$
rounds; we can use

\begin{itemize}
\item $2 \log \ell_v$ bits to encode in a prefix-free way the number $\ell_v$;
\item $2 \log d'$ bits to encode in a prefix-free way the number $d'$ of 
accepted requests;
\item $\log \binom{\ell_v}{d'}$ bits to encode the positions of the accepted 
requests in the sequence of $\ell_v$ requests;

\item $\log \Delta$ bits for each one of the $d'$ accepted request and
$\log(n/c)$ bits for each one of the $\ell_v - d'$ rejected ones. Notice that
we can use only $\log (n/c)$ bits instead of $\log \Delta$ to encode a rejected
request since (i) each rejected request was directed to a node that was
\textit{overloaded} (i.e., a node that received at least further $cd$ incoming
requests) at the time of the request from $v$; (ii) since each node sends at
most $d$ requests at each round, it follows that the number of overloaded nodes
is at most $n/c$ at each round; (iii) the previous bits of our encoding
uniquely identify the set of overloaded nodes at each round. 
\end{itemize}

\noindent In contrast to the $\ell_v \log \Delta$ bits used by node $v$ in the
uncompresed encoding we thus use only
\begin{align*}
& 2 \log \ell_v + 2 \log d' + \log \binom{\ell_v}{d'} + d' \log n + (\ell_v -
d')\log(n/c) = \\
& = \ell_v \log (n/c) + 2 \log (\ell_v d') + d' \log c + \log
\binom{\ell_v}{d'} = \\
& = \ell_v \log \Delta - \left[ \ell_v \log (\alpha c) - 2 \log (\ell_v d') -
d' \log c - \log \binom{\ell_v}{d'} 
\right]
\end{align*}

\noindent Hence, the total number of bits for node $v$ saved in our encoding is
\begin{align*}
& \ell_v \log (\alpha c) - d' \log c - \log \binom{\ell_v}{d'} - 2 \log (\ell_v
d') \\
& \geq \ell_v \log(\alpha c) - d' \log c - d' \log \frac{e \ell_v}{d'} - 2 \log
(\ell_v d') \\
& \geq (1/2) \ell_v \log (\alpha c) 
\end{align*}
where in the first inequality we used that $\log \binom{\ell_v}{d'} \leqslant
d' \log(e \ell_v / d')$ and the last inequality holds for $\ell_v$ large
enough, since $c$, $d'$, and $\alpha$ are $\bigO(1)$. 

By using that $\ell_v \geqslant t \geqslant (\beta/\log(\alpha c)) \log n$ and
that in our encoding we use $\log n$ bits to identify node $v$, the fraction of
strings determining an execution such that there is a node $v$ with $d' < d$
outgoing edges at round $t$ is thus at most 
\[
2^{- (1/2) \ell_v \log (\alpha c) + \log n} \leqslant 2^{-(1/2) t \log (\alpha
c) + \log n} \leqslant2^{-(1/2) \beta \log n + \log n} = n^{-(\beta - 2)/2} \,. 
\]
\end{proof}

\end{document}